\documentclass[11pt]{article}

\usepackage{amsmath,amssymb,amsfonts,amsthm,mathtools}
\usepackage{thmtools}
\makeatletter
\@ifundefined{newcounteralias}{}{%
  \renewcommand\thmt@autorefsetup{%
    \@xa\def\csname\thmt@envname autorefname\@xa\endcsname
      \@xa{\thmt@thmname}%
  }%
}
\IfFormatAtLeastTF{2024-11-01}{
  \renewcommand*\@addtoreset[2]{%
    \bgroup
      \edef\aliasctr@@truelist{\aliasctr@follow{#2}}%
      \let\@elt\relax
      \expandafter\@cons\aliasctr@@truelist{{#1}}%
    \egroup
    \expandafter\xdef\csname theH#1\endcsname{%
      \expandafter\noexpand\csname theH#2\endcsname.%
      \noexpand\the\noexpand\value{#1}}%
  }
}{}
\makeatother
\usepackage{thm-restate}

\usepackage[utf8]{inputenc}
\usepackage[T1]{fontenc}
\usepackage[tt=false, type1=true]{libertine}
\usepackage[varqu]{zi4}
\usepackage[libertine]{newtxmath}
\usepackage[margin=1in]{geometry}
\usepackage{microtype,parskip}
\usepackage{placeins}
\usepackage{float}

\usepackage{graphicx,xcolor}
\usepackage{booktabs,array,multirow,tabularx}

\usepackage[ruled,vlined]{algorithm2e}
\SetKwInput{KwRequire}{Requires}
\SetAlgoNoLine
\DontPrintSemicolon
\LinesNumbered
\SetKwComment{Comment}{\textcolor{blue}{// }}{}

\usepackage{subcaption}
\definecolor{ForestGreen}{rgb}{0.1333,0.5451,0.1333}
\definecolor{DarkRed}{rgb}{0.65,0,0}
\usepackage{hyperref}
\hypersetup{
    colorlinks=true,
    linkcolor=DarkRed,
    filecolor=magenta,
    urlcolor=cyan,
    citecolor=ForestGreen,
    linktocpage=true,
    pagebackref=true,
    bookmarks=true,
    bookmarksopen=true,
    bookmarksnumbered=true
}
\usepackage[nameinlink]{cleveref}
\usepackage{csquotes}
\usepackage[style=alphabetic,natbib=true,maxalphanames=3,minalphanames=3,maxbibnames=99]{biblatex}
\usepackage{tikz}
\usetikzlibrary{arrows.meta,positioning,calc,decorations.pathreplacing,matrix,patterns,decorations.pathmorphing,decorations.markings,fit}

\usepackage{pdfcomment}

\usepackage{xspace,nicefrac}
\theoremstyle{plain}
\newtheorem{theorem}{Theorem}
\newtheorem{corollary}{Corollary}[section]
\newtheorem{lemma}[corollary]{Lemma}

\newtheorem{proposition}[corollary]{Proposition}

\AddToHook{cmd/appendix/before}{\crefalias{section}{appendix} \crefalias{subsection}{appendix}}

\usepackage{needspace,xurl}
\hypersetup{pdftitle={Constructive Exact Arboricity},pdfauthor={Iman Gholami, Mohammad T. Hajiaghayi, Mahdi JafariRaviz, Alireza Kaviani, Soheil Mohammadkhani, AmirMohammad Shahrezaei}}
\newcommand{\tO}{\widetilde{O}}
\newcommand{\dotcup}{\mathbin{\dot{\cup}}}
\newcommand{\ExactK}{\textsc{ExactK}}

\newcommand{\CompleteStar}{\textsc{CompleteStar}}
\newcommand{\Arboricity}{\textsc{Arboricity}}

\SetKwInput{KwIn}{Input}
\SetKwInput{KwOut}{Output}
\SetKw{Return}{return}
\SetKwFunction{Peel}{Peel}
\SetKwFunction{ExactKCall}{ExactK}
\SetKwFunction{AdditiveOneCall}{AdditiveOne}
\crefname{algocf}{algorithm}{algorithms}
\Crefname{algocf}{Algorithm}{Algorithms}
\title{Constructive Arboricity in Almost-Linear Time for Dense Graphs\thanks{We are posting this preliminary version to record the results. We will improve the exposition in future versions.}}
\author{
Iman Gholami
\and Mohammad T. Hajiaghayi
\and Mahdi JafariRaviz
\and Alireza Kaviani
\and Soheil Mohammadkhani
\and AmirMohammad Shahrezaei
}
\date{October 9, 2026}

\begin{document}
\maketitle
\begingroup
\renewcommand{\thefootnote}{}
\begin{NoHyper}
\footnotetext{\raggedright Email: \texttt{\{imangholami77,\allowbreak hajiaghayi,\allowbreak mahdijafariraviz,\allowbreak akaviani05,\allowbreak mohammadkhani.soheil,\allowbreak s.a.m.shahrezaei\}@gmail.com}}
\end{NoHyper}
\endgroup

\begin{abstract}
The arboricity of a graph is the minimum number of forests needed
to partition its edges. We present a randomized algorithm that
computes the exact arboricity of a simple graph and constructs an optimal forest partition in $\min\{n^{2+o(1)},m^{4/3+o(1)}\}$
time both in expectation and with high probability, where $n$ and $m$ denote the numbers of vertices and edges. In particular,
it runs in almost-linear time, $O(m^{1+o(1)})$, on dense graphs
with $m=\Theta(n^2)$.

We also give the first known almost-linear-time algorithm that
constructs a partition into at most one more forest than optimal.
\end{abstract}

\newpage
\tableofcontents
\newpage

\section{Introduction}
\label{sec:intro}


Let $G=(V,E)$ be a simple undirected graph with $n=|V|$ vertices and $m=|E|$ edges. We write $G[X]=(X,E_G(X))$ as the induced subgraph on a vertex set $X\subseteq V$. The \emph{arboricity} of the graph $G$, denoted as $a(G)$, is the minimum number of forests needed to partition its edges. The Nash--Williams characterization \cite{NashWilliams} provides a formula for the arboricity:
\begin{equation}
    a(G) = \max_{\substack{X\subseteq V\\|X|\geq 2}} \left\lceil \frac{|E_G(X)|}{|X|-1} \right\rceil.
\end{equation}
The arboricity is directly related to the \emph{pseudoarboricity} $\rho(G)$ of the graph, defined as the minimum number of pseudoforests needed to partition its edges. A pseudoforest is a graph in which each connected component contains at most one cycle. The Nash--Williams characterization also provides a formula for the pseudoarboricity:
\begin{equation}
    \rho(G) = \max_{\substack{X\subseteq V\\|X|\geq 2}} \left\lceil \frac{|E_G(X)|}{|X|} \right\rceil.
\end{equation}
It is known that the arboricity is within one of the pseudoarboricity for simple graphs: $\rho(G) \le a(G) \le \rho(G) + 1$. A $k$-out orientation of a graph is an orientation of its edges such that each vertex has outdegree at most $k$. A partition into $\rho(G)$ pseudoforests is equivalent to the orientation of the edges such that each vertex has outdegree at most $\rho(G)$.

Using recent breakthroughs on almost-linear maximum flow algorithms \citep{van2023deterministic}, computing the value of the pseudoarboricity $\rho(G)$ can be done in almost-linear time $O(m^{1+o(1)})$ using the equivalent orientation characterization which can be turned into one maximum flow computation. The same algorithm is constructive and gives an orientation and hence a pseudoforest partition within the same time bound. Recently, \citet{cen2025fast} provided a reduction for computing the arboricity value $a(G)$ to the global minimum cut problem, even in weighted graphs. In a concurrent work as ours, \citet{fleischmann2026directedglobalminimumcut} provided an almost-linear time algorithm for the directed global minimum cut problem which consequently provides an almost-linear time algorithm for computing the arboricity value. However, the reduction provides no construction for the forests. \citet{gabow1991matroid} provides an $\tO(m \cdot a(G))$ time algorithm for constructing the forest partition, which on simple graphs gives a $\tO(m^{3/2})$ algorithm. \citet{BlumenstockFischer} provide an algorithm for constructing $\rho(G)+1$ forests in time $\tO(m \cdot a(G))$. Their algorithm converts a pseudoforest partition in this time into a forest partition.

In this work, we present a randomized algorithm for constructing a forest partition of a graph which runs in time $O(\min\{n^{2+o(1)}, m^{4/3+o(1)}\})$, both in expectation and with high probability.

\subsection{Main results}

Our main result is a randomized algorithm that computes the exact
arboricity and constructs an optimal forest partition.

\begin{theorem}[Constructive exact arboricity]
\label{thm:main}
For every simple unweighted undirected graph $G$ with $n$ vertices
and $m\geq 1$ edges, a randomized algorithm returns $a(G)$
and a partition of $E(G)$ into $a(G)$ forests.
The running time is
$$
  \min\{n^{2+o(1)},m^{4/3+o(1)}\}
$$
both in expectation and with high probability. Every returned partition
is correct.
\end{theorem}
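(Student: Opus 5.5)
We plan to start from the pseudoarboricity machinery and reduce exact arboricity to a short augmentation phase. First, compute $\rho=\rho(G)$ and a $\rho$-out orientation with one almost-linear maximum flow computation \citep{van2023deterministic}. Since $\rho\le a(G)\le\rho+1$, it then suffices to (i) build a partition into $\rho+1$ forests in almost-linear time, which is our additive-one result, and (ii) decide whether $\rho$ forests suffice and, if so, build them. For (i), we would follow \citet{BlumenstockFischer}: from the $\rho$-out orientation, every vertex has at most $\rho$ out-edges, so we label out-edges $1,\dots,\rho$ and obtain $\rho$ pseudoforests. Each pseudoforest has at most one cycle per component. From each such cycle we remove one edge and put all removed edges into the extra forest $F_{\rho+1}$. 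The danger is that $F_{\rho+1}$ contains cycles. We would avoid this by choosing the removed edges randomly, or by re-orienting via a second flow so that the removed edges form a matching-like structure. We would aim for an $m^{1+o(1)}$ repair using dynamic trees: whenever a removed edge closes a cycle in $F_{\rho+1}$, we swap it along that cycle with an edge of some pseudoforest component.

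For (ii), we would compute $a(G)$ exactly using the concurrent directed min-cut result \citep{fleischmann2026directedglobalminimumcut} or the reduction of \citet{cen2025fast}. If $a(G)=\rho+1$, we are done by (i). Otherwise $a(G)=\rho=:k$, and we run a matroid-partition augmentation in the style of \citet{gabow1991matroid}, starting from $k$ forests that cover all but a set $R$ of edges. Each augmenting path inserts one edge of $R$ and takes $\tO(m)$ time. The key quantitative step is to show that the initial deficiency $|R|$ is small. We need $|R|\cdot m \le \min\{n^{2},m^{4/3}\}\cdot m^{o(1)}$, so $|R|=O(\min\{n^2/m,\,m^{1/3}\})$ up to $m^{o(1)}$ factors. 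Using $m\le kn$, we would try to show that the leftover after a careful choice of cycle-breaking edges is at most about $n/k$ and also at most about $\sqrt{m}/\ldots$. Then we would balance this against a dense-subgraph argument. When $k\ge m^{1/3}$, the graph must be dense, with $n\le m^{2/3}$, and the budget comes from $n^2$. When $k<m^{1/3}$, Gabow's $\tO(mk)$ bound already gives $m^{4/3}$.

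The key steps in order are therefore: the flow-based orientation; the additive-one construction; computing the exact value; and, when $a(G)=\rho$, a case split on $k$ versus $m^{1/3}$. For small $k$ we run Gabow directly. For large $k$ we start from the additive-one partition, dissolve its extra forest into $R$, and augment. Randomization enters through random choices of which cycle edges to drop, which is meant to keep $|R|$ small in expectation; a Markov argument plus repetition then gives the high-probability bound. Correctness is deterministic because every step only outputs verified forests.

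The main obstacle is bounding $|R|$, i.e.\ showing that only $O(n^2/m)$ edges, up to $m^{o(1)}$ factors, need augmentation when $k$ is large. First we would try a counting argument: the number of unicyclic components in the $k$ pseudoforests is at most $k n - m$ plus a slack term, which is small exactly when the graph is near-tight. If the graph is not near-tight, we would instead peel off the densest Nash--Williams set found by the flow and recurse on the remainder.
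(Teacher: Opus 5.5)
\textbf{There is a genuine gap.} It sits in the step you yourself flag as the main obstacle: bounding the set $R$ of leftover edges that must be inserted one augmenting path at a time.

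With $\tO(m)$ per augmentation, the budget $n^{2+o(1)}$ on a dense graph allows only $m^{o(1)}$ augmentations. So your initial $k$ forests would have to miss only $m^{o(1)}$ edges. For $K_n$ with $n$ even, $k=\rho=a=n/2$ and $m=k(n-1)$. There this means starting from $k$ spanning trees up to $m^{o(1)}$ missing edges, which is essentially the whole problem. Nothing in the additive-one construction bounds the extra forest below $n-1$ edges.

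Your counting heuristic does not close this. $kn-m$ is the number of vertex--color pairs with no out-edge, i.e.\ the number of \emph{tree} components of the pseudoforests. It says nothing about the unicyclic ones. Moreover, whenever $a(G)=k$ we have $m\le k(n-1)$, so $kn-m\ge k$. Any guarantee of the form $|R|\lesssim kn-m$ therefore yields at best $\tO(km)$, which is no better than Gabow exactly in the regime $k\ge m^{1/3}$ where you must beat him. The fallback ``peel the densest Nash--Williams set and recurse'' is not specified.

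Two smaller problems:
\begin{itemize}
\item Your claim that $k\ge m^{1/3}$ forces $n\le m^{2/3}$ is false without a core reduction. Take a clique on $\Theta(\sqrt m)$ vertices plus a path on $\Theta(m)$ vertices. Here even $|R|\approx n/k$ would not give $m^{4/3}$.
\item In part (i), randomizing \emph{which} cycle edge to drop leaves the number of cycles unchanged. The paper instead randomizes the assignment of each vertex's out-edges to colors. This gives $\mathbb{E}[Z]\le (m/k)\ln n$, so the Blumenstock--Fischer term $k|H|$ is only $\tO(m)$.
\end{itemize}

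\textbf{The missing idea is to make the unit of repair a \emph{vertex} rather than an edge.}
\begin{itemize}
\item The paper first peels vertices of degree at most $k$. This leaves a core with $(k+1)N\le 2M\le 2kN$.
\item It greedily picks a hitting set $R$ of $\tO(1+N/k)$ vertices meeting every $\{v\}\cup N^+(v)$ with $d^+(v)=k$. Then $Q-R$ is $(k-1)$-out, and the additive-one construction gives exactly $k$ forests there with nothing left over.
\item Each vertex of $R$ is restored together with its entire star by a single min-cost flow on a network of size $O(Nk)=O(M)$ (the Vertex Reduction theorem). Depth-based costs force the recolored classes to be acyclic. On failure, a residual cut yields $X$ with $|E(X)|>k(|X|-1)$, so no separate computation of $a(G)$ is needed.
\end{itemize}
This branch costs $M^{1+o(1)}(1+N/k)\le (M+N^2)^{1+o(1)}$, since $M\le kN$. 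Switching to Gabow's $\tO(kM)$ construction when $k^2\le k+N$ gives $M^{1+o(1)}\min\{k,1+N/k\}$. This is at most both $n^{2+o(1)}$ and $m^{4/3+o(1)}$.
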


For dense graphs with $m=\Theta(n^2)$, this is almost linear in
the number of edges: we obtain an optimal forest partition in
$m^{1+o(1)}$ time.

An orientation is \emph{$k$-out} if every vertex has outdegree at
most $k$. We begin by finding the smallest feasible $k$ and a
$k$-out orientation in almost-linear time using exact flow.
Since $k\leq a(G)\leq k+1$, it remains
to decide whether $k$ forests suffice. Our approach combines a fast
additive-one decomposition with a procedure for inserting a star
while updating an existing forest partition.

Blumenstock and Fischer~\cite{BlumenstockFischer} convert $k$
pseudoforests into $k+1$ forests in $\tO(mk)$ time. Our randomized
algorithm builds on their conversion and improves this bound by
choosing a pseudoforest partition with few cycles.

\begin{theorem}[Constructive additive-one decomposition]
\label{thm:additive}
Given a simple graph $G$ and a $k$-out orientation, a randomized
algorithm constructs $k+1$ forests partitioning $E(G)$ in
$\tO(m+nk)$ time, both in expectation and with high probability.
\end{theorem}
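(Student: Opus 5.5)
A $k$-out orientation splits $E(G)$ into $k$ pseudoforests. To get each vertex $v$ with out-edges $e_1(v),\dots,e_{d(v)}(v)$, $d(v)\le k$, we assign labels via a permutation $\pi_v$ of $\{1,\dots,k\}$ and put $e_j(v)$ into class $\pi_v(j)$. Each class $P_i$ is then a functional subgraph: every vertex has at most one out-edge in $P_i$. Hence every component of $P_i$ has at most one cycle, and each cycle is a directed cycle. The plan is to make every $P_i$ a forest by deleting one edge from each cycle and packing all the deleted edges into a single extra forest $F_{k+1}$. This follows the Blumenstock--Fischer conversion, but with the labels chosen so that there are few cycles and their removal is cheap. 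Forming the classes, finding all cycles of all $P_i$, and removing one edge per cycle takes $O(m+nk)$ time, since each vertex keeps a size-$k$ label array.

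\textbf{Step 1: random labelling.} Choose each $\pi_v$ independently and uniformly at random. Fix a directed cycle $C$ of length $\ell$ in the orientation. Since $G$ is simple, $\ell\ge 3$. The cycle $C$ lies entirely in one class only if all $\ell$ of its out-edges receive the same label, which happens with probability at most $k\cdot k^{-\ell}$. Summing over cycles directly gives bad bounds, so instead I bound the expected number of cycles per class. Each component of $P_i$ has at most one cycle, so the number of cycles is at most the number of components, but this alone does not help. The right quantity is the cycle structure of a random functional graph: under uniform labels, $P_i$ behaves like a random partial function on $V$ with edges drawn from the orientation. I expect to show the expected total number of cycles over all classes is $O(n)$, or even $O(n/k + \log n)$. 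The argument would charge each cycle to its vertex of minimum index and use that, conditioned on the label walk, returning to the start has probability about $1/k$ per step from other vertices.

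\textbf{Step 2: the extra forest.} Let $R$ be the set of removed edges, one per cycle. If $R$ is itself a forest we are done. In general it is not, so we use Blumenstock--Fischer augmentations: for each $e\in R$ that would close a cycle in $F_{k+1}$, search for an exchange path through $F_1,\dots,F_{k+1}$. Their analysis charges $O(m)$ per augmentation, which is where the $\tO(mk)$ bound comes from. The plan is to make each augmentation cost $\tO(m/|R|+k)$ amortized, or to reduce $R$ to size $\tO(n/k)$ so that $|R|\cdot \tO(k+m/k)\le\tO(m+nk)$. For the latter I would use the fact that a cycle in class $i$ can often be broken by swapping the label of one of its edges with another label at the same tail vertex. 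This is a local recolouring that costs $O(1)$ expected time and succeeds with constant probability, because the swapped-in class likely puts that vertex in a different tree. Repeating it until only $O(n/k)$ cycles remain, then running the Blumenstock--Fischer fallback on the survivors, gives the bound. High-probability guarantees come from running $O(\log n)$ independent trials and applying Markov's inequality.

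\textbf{Main obstacle.} The hard part is Step 2: proving that the remaining cycle edges can be inserted cheaply. The random-swap step needs an analysis showing each swap destroys a cycle without creating many new ones, which is a potential argument over the total number of cycles across classes. I would try the potential $\sum_i(\#\text{cycles in }P_i)$ and show that a random swap decreases it in expectation by $\Omega(1)$ whenever it is positive.
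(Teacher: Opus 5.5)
There is a genuine gap in both steps.

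\textbf{Step 1.} You abandon the direct union bound over directed cycles, but that bound is exactly what works; it is the paper's \Cref{lem:cycles}. In a $k$-out orientation of a simple graph there are at most $mk^{\ell-2}/\ell$ directed cycles of length $\ell$:
choose the first edge ($m$ ways); each of the next $\ell-2$ edges is an out-edge of the current vertex (at most $k$ ways); the closing edge is forced by simplicity; and each cycle is counted $\ell$ times.
Each such cycle is monochromatic with probability $k^{1-\ell}$, so the powers of $k$ cancel and $\mathbb{E}[Z]\le (m/k)\sum_{\ell=3}^{n}1/\ell=O((m/k)\log n)$.

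Your replacement targets are only conjectured, and the stronger one, $O(n/k+\log n)$, is false. Take $K_n$ with $n$ odd, oriented as a regular tournament, with $k=(n-1)/2$. It has $(n-1)n(n+1)/24$ directed triangles, each monochromatic with probability $k^{-2}$, so about $n/6$ monochromatic triangles are already expected.

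\textbf{Step 2.} This is where the missing idea lies, and as written it is not a proof. The random label-swap procedure has no analysis. A swap at $v$ moves one edge from $P_i$ to $P_j$ and possibly one from $P_j$ to $P_i$, so it can close cycles in both classes. Nothing shows an expected $\Omega(1)$ decrease of your potential, and testing a swap needs connectivity queries, not $O(1)$ time. Your accounting also does not close: with $|R|=\tO(n/k)$ and cost $\tO(k+m/k)$ per augmentation, the total contains $nm/k^2$. That is $\Theta(n^2)$ for $k=2$ and $m=\Theta(n)$, against a target of $\tO(n)$.

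Your first alternative, amortized $\tO(m/|R|+k)$ per removed edge, is essentially the right target. No reduction of $R$ is needed to reach it. The point of \Cref{lem:conversion} is that the Blumenstock--Fischer conversion itself, started from the surplus graph $H=M_1\dotcup\cdots\dotcup M_k$ of removed cycle edges, runs in $\tO(m+nk+k|H|)$ time rather than $O(m)$ per edge. The reasons are:
\begin{enumerate}
\item $H$ is kept colorful, so each component has at most $k$ edges, and a repeated color appears within $k+1$ edges.
\item Each path exchange (their Lemma~8) and each cycle exchange (their Lemma~11) costs $\tO(k)$, using link-cut trees and their adjacency structure.
\item The number of edges moved out of $H$, and the number of cycle exchanges, are each at most the initial $|H|$. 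A cycle exchange strictly lowers the number of surplus edges lying on cycles.
\item There are only $O(n+|H|)$ completed component searches, each examining at most $k$ edges.
\end{enumerate}
Plugging in $|H|=O((m/k)\log n)$ from Step 1 gives $\tO(m+nk)$ directly. Your $O(\log n)$ resampling with Markov's inequality then yields the high-probability statement exactly as you planned.
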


The key is to assign distinct outgoing colors uniformly at random at each vertex, independently across vertices. The resulting color classes $P_1,\ldots,P_k$ are pseudoforests with at most $m\log_2 n/k$ cycles in expectation. We try at most $\lceil 10\log_2 n\rceil$ independent colorings, stopping once the number of cycles is at most $2m\log_2 n/k$. This bound holds for the final coloring with probability at least $1-n^{-10}$. For each $i$, remove one edge from each cycle of $P_i$ to form a matching $M_i$, and let $H=M_1\dotcup\cdots\dotcup M_k$ be the surplus graph. We show that the algorithm of \citet{BlumenstockFischer} converts these pseudoforests into $k+1$ forests in deterministic time $\tO(m+nk+k|H|)$. When the cycle bound holds, this is $\tO(m+nk)$. After computing an orientation with $k=\rho(G)$, this gives a partition into at most $a(G)+1$ forests.

To decide whether $k$ forests suffice, for $k\geq 2$ we use a greedy hitting set to find a set $R$ of $\tO(1+n/k)$ vertices whose deletion leaves a $(k-1)$-out orientation. By \Cref{thm:additive}, we can partition the edges of $G-R$ into $k$ forests. We then restore the vertices of $R$ one by one. Each restoration inserts a star, and our star-completion procedure updates the partition, recoloring old edges as needed, or reports that $k$ forests are impossible. If all insertions succeed, we obtain an optimal $k$-forest partition of $G$. Otherwise $a(G)=k+1$, and a second call to $\ExactK$ with parameter $k+1$ and the same orientation constructs an optimal partition.

Finding $R$ and constructing the initial partition take $\tO(m+nk)$ time. The $\tO(1+n/k)$ star insertions take $(m+nk)^{1+o(1)}(1+n/k)$ time in total. Thus this branch takes $O((m+nk)^{1+o(1)} \cdot (1+n/k))$ time. For small $k$, we instead use Gabow's rooted tree construction~\cite{gabow1991matroid}. Its directed graph has $kn$ arcs, giving a running time of $\tO(nk^2)$. We choose the cheaper of these two procedures. The star-completion procedure is our third result.

\begin{theorem}[Vertex Reduction]
\label{thm:star}
Let $H$ be a simple graph with $n=|V(H)|$ and a specified
vertex $r$. Suppose $E(H-r)$ is partitioned into $k\geq 1$ forests
$F_1,\ldots,F_k$. Given this partition, a deterministic algorithm
returns either a $k$-forest partition of $E(H)$, possibly recoloring
edges of $H-r$, or reports that no such partition exists.
It runs in $(nk)^{1+o(1)}$ time and uses at most one integral
minimum-cost-flow computation on a network of $O(nk)$ size.
\end{theorem}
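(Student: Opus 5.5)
We reduce the star insertion to a matroid-intersection augmentation problem, which we solve with a single min-cost flow. Let $S=\{rv : rv\in E(H)\}$ be the star at $r$, with $d=|S|\le n-1$.

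\textbf{Step 1: reduce to forests containing only star edges.} We first check the necessary count $|E(H)|\le k(n-1)$ and report infeasibility if it fails. The structural claim we will use is the following: $E(H)$ has a $k$-forest partition if and only if we can choose forests $F'_1,\dots,F'_k$ partitioning $E(H-r)$ together with an assignment of each star edge $rv$ to a color $i$. The requirement is that, for each $i$, the star edges of color $i$ join $r$ to vertices in pairwise distinct components of $F'_i$. Since $r$ is isolated in $H-r$, adding star edges to $F'_i$ creates a cycle only if two of them land in the same component of $F'_i$. By matroid union, $E(H)$ is $k$-forest-partitionable iff it is independent in the union matroid $\mathcal{M}^k$. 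The given partition already certifies that $E(H-r)$ is independent there, so we need to insert $d$ new elements. Inserting them one at a time by augmenting paths would cost $d$ rounds, which is too slow. Instead we insert all of $S$ at once.

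\textbf{Step 2: contract to a bipartite assignment.} Star edges interact with the forest $F_i$ only through its components. Recoloring old edges can merge or split components, however, so a static bipartite matching is not enough. We therefore use the standard Edmonds/Roskind--Tarjan exchange graph, in which an edge $e$ of color $j$ may move to color $i$ while displacing an edge on the $F_i$-cycle it closes. We compress this graph. For each forest $F_i$ we root every tree and use the fact that the fundamental cycle of $e=uv$ in $F_i$ is the tree path $u\leadsto v$. We model "displacing some edge on the path" with $O(n)$ auxiliary nodes per forest, one node per tree edge with arcs toward the parent. This gives a network with $O(nk)$ nodes and arcs. A source sends one unit into each star edge and sinks absorb units at "free" slots, meaning a color in which the edge merges two distinct components.

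The difficulty is that independent augmenting paths are only valid as a simultaneous exchange if they are \emph{shortest}, or more precisely admissible in a potential sense. Otherwise the union of paths can create cycles. This is exactly why we use min-cost flow rather than max-flow. We assign costs, namely unit lengths on exchange arcs, or the lexicographic/weight-splitting potentials of matroid intersection, so that a min-cost integral flow decomposes into paths. These paths are simultaneously shortest with respect to a feasible dual. By the weighted matroid intersection optimality argument (Frank's weight-splitting), such a flow then yields a valid common independent set. We compute it once with the almost-linear min-cost flow algorithm of \citet{van2023deterministic}, in $(nk)^{1+o(1)}$ time. If the flow value is less than $d$, the exchange-graph cut gives a Nash--Williams violator $X\ni r$, and we report that no partition exists.

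\textbf{Step 3: recover the forests.} We decompose the integral flow into paths in $O(nk)$ time, after canceling cycles. We apply the exchanges path by path to obtain $F'_1,\dots,F'_k$ and the colors of the star edges. Finally we verify acyclicity with union--find in near-linear time.

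\textbf{Main obstacle.} The hard part is Step 2. We must show that one min-cost flow on the \emph{compressed} network faithfully captures simultaneous exchanges: the tree-path gadget must not introduce spurious paths, and the optimality of the flow must imply validity of the combined exchange. We would first try to prove this by exhibiting explicit potentials in which every flow path is tight, and then invoke the "no shortcut" lemma of matroid intersection on the uncompressed exchange graph.
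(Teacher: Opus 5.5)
Your Step~2 carries the entire theorem, and as sketched it does not go through. The central claim is that a min-cost flow in the exchange graph of the \emph{current} partition, with unit arc lengths or weight-splitting potentials, yields a valid \emph{simultaneous} augmentation. This is not a known fact, and it fails without specially designed costs. The exchange graph is defined relative to a fixed independent set, and individually valid exchanges of the same length can combine into a cycle. For instance, suppose $F_i$ contains the path $x\,a\,b\,c\,y$, and two chains move original edges $ac$ and $xy$ into color $i$, displacing $ab$ and $bc$ respectively. Each swap alone keeps $F_i$ a forest, but together they create the cycle $x\,a\,c\,y\,x$. Since $ab$ and $bc$ lie on both fundamental cycles, the exchange matching is not unique. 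A good choice (say $ac$ displacing $ab$ and $xy$ displacing $cy$) can have exactly the same length as the bad one, so unit lengths do not exclude it.

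Frank's weight splitting certifies optimality of a weighted common independent set. It does not turn one flow in a fixed exchange graph into a valid multi-element augmentation, which is why the standard algorithms rebuild the exchange graph after every path. Your compression is also not exact. The fundamental cycle of $uv$ in $F_i$ turns at $\mathrm{lca}(u,v)$, but a chain of arcs toward the parent also reaches edges above the LCA, which are not on the cycle. So the gadget adds spurious exchanges, and these can also inflate the flow value, undermining your infeasibility test. You flag both issues as the ``main obstacle'' but leave them open. The union--find check in Step~3 only detects a bad output; it provides no fallback.

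The paper avoids the generic exchange graph altogether. It roots each $F_i$, so every vertex has at most one outgoing edge per color. Inserting $rv$ (directed $v\to r$) is then modeled as a chain of reversals. A unit entering the slot $P_{v,i}$ displaces the unique parent edge $vp$ of $v$ in $F_i$. That edge travels to $Q_{p,i}$ and either stays in color $i$ or switches color through $h_p$ at cost $w_i(p)=2m-d_i(p)$. Each slot has a single outgoing arc of capacity one, so flow conservation enforces the per-color outdegree bound exactly, and no path gadget is needed. Acyclicity, the only nonlocal condition, comes from the depth costs. Using $w_i(u)=w_i(v)-1$ along tree edges, each edge that changes color yields a negative-cost residual walk as in \eqref{eq:residualwalk}, and each unchanged or reversed edge yields a zero-cost one. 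A cycle in some $F'_i$ would therefore produce a negative closed walk in the residual network, contradicting minimality. Infeasibility is read off a minimum cut of the projected orientation network $\bar D$, giving $|E_H(X)|>k(|X|-1)$. This combination of a per-color outdegree model with costs tied to forest depths is the idea missing from your plan.
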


We encode chains of edge replacements in a flow network: inserting
an edge may displace an old edge, which is reassigned elsewhere.
Flow conservation enforces outdegree at most one in each color.
Costs derived from depths in the original forests ensure acyclicity,
since a resulting cycle would contradict minimum-cost optimality.
Compressing the network to $O(p)$ size gives the almost-linear
bound. If the star cannot be fully inserted, a cut extracted from
the same flow supplies the density witness.

\section{A near-linear constructive additive-one algorithm}
\label{sec:additive}

In this section, we prove \Cref{thm:additive}. With the $k$-out orientation of the input graph, we randomly and independently assign outgoing edges of each vertex to one of $k$ colors. Then because each vertex has at most one outgoing edge in each color, all edges of each color form a pseudoforest. We show that the number of cycles is small in expectation, allowing for an efficient conversion to a forest partition by applying the algorithm of \citet{BlumenstockFischer}.

\begin{algorithm}[H]
\caption{Additive forest partition}
\label{alg:additiveone}
\KwIn{Simple graph $G$ and a $k$-out orientation}
\KwOut{A partition of $E(G)$ into $k+1$ forests}
\For{$t=1,\ldots,\lceil 10\log_2 n\rceil$}{\nllabel{line:additive-repeat}
  Start with empty pseudoforests $P_1,\ldots,P_k$\;
  \ForEach{$v \in V(G)$}{
    Assign the outgoing edges of $v$ uniformly at random to distinct $P_i$, independently of the other vertices\;
  }
  $Z\gets$ the total number of cycles in $P_1,\ldots,P_k$\nllabel{line:additive-pseudoforests}\;
  \lIf{$Z\leq 2m/k \log_2 n$\nllabel{line:additive-until}}{\textbf{break}}
}
Apply the conversion of \citet{BlumenstockFischer} to $P_1,\ldots,P_k$ to obtain the forest partition $E(G)=F_1\dotcup\cdots\dotcup F_{k+1}$\;
\Return $(F_1,\ldots,F_{k+1})$\;
\end{algorithm}

\subsection{Number of cycles}
\label{sec:randomcolors}

We show that the probability that all $\lceil 10\log_2 n\rceil$ trials in lines~\ref{line:additive-repeat}--\ref{line:additive-until} of \Cref{alg:additiveone} have more than $2m\log_2 n/k$ cycles is at most $n^{-10}$. Since the trials use independent random choices, it is enough to show that a single trial exceeds this threshold with probability at most $1/2$. We obtain this bound by first estimating the expected number of cycles.

\begin{lemma}[Expected number of monochromatic cycles]
\label{lem:cycles}
In one iteration of \Cref{alg:additiveone}, the total number $Z$ of cycles in $P_1,\ldots,P_k$ satisfies
\begin{equation*}
  \mathbb{E}[Z]\leq \frac{m}{k}\sum_{\ell=3}^{n}\frac{1}{\ell}
  \leq \frac{m\log_2 n}{k}.
\end{equation*}
\end{lemma}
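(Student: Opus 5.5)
Each cycle of a pseudoforest $P_i$ is a directed cycle in the given orientation: every vertex has outdegree at most one in $P_i$, so a cycle in a component of $P_i$ must be consistently oriented. Since $G$ is simple, such a cycle has length $\ell\ge 3$; a 2-cycle would require two parallel edges. So I would write $Z=\sum_{C}\sum_{i=1}^k \mathbf{1}[C\subseteq P_i]$, where $C$ ranges over directed cycles of the oriented graph of length $\ell\in\{3,\dots,n\}$. Then I would use linearity of expectation.

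Next I would compute the probability for a fixed directed cycle $C=(v_1\to v_2\to\cdots\to v_\ell\to v_1)$. Each vertex $v_j$ contributes exactly one edge of $C$, namely its out-edge on $C$, and the colors at distinct vertices are independent. At a vertex of outdegree $d\le k$, the $d$ out-edges receive a uniformly random injective assignment into $[k]$. Hence any fixed out-edge gets color $i$ with probability exactly $1/k$. So $\Pr[C\subseteq P_i]=k^{-\ell}$, and summing over $i$ gives $\Pr[C\text{ monochromatic}]=k^{1-\ell}$.

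The main obstacle is counting directed cycles of length $\ell$ with the right weight. Here I would use a path-extension argument. Root each cycle at one of its edges $e=(u\to v)$, so each cycle is counted $\ell$ times. From $v$, a directed walk of $\ell-1$ further steps must follow out-edges, with at most $k$ choices per step, the last step returning to $u$. This gives at most $k^{\ell-2}$ cycles of length $\ell$ through the edge $e$: there are $\ell-2$ free intermediate choices, and the final step is forced. Hence
\[
\#\{\text{directed }\ell\text{-cycles}\}\le \frac{m\,k^{\ell-2}}{\ell},
\qquad\text{so}\qquad
\mathbb{E}[Z]\le\sum_{\ell=3}^n \frac{m k^{\ell-2}}{\ell}\,k^{1-\ell}=\frac{m}{k}\sum_{\ell=3}^n\frac1\ell .
\]
This per-edge rooting with the forced closing step is the step I would check carefully.

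Finally, $\sum_{\ell=3}^n 1/\ell\le H_n-1.5\le \ln n\le \log_2 n$, which gives the second inequality. For $n<3$ the sum is empty and $Z=0$, so the bound holds trivially.
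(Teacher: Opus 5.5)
Your proposal is correct and follows essentially the same argument as the paper: root each directed $\ell$-cycle at one of $m$ edges, extend with at most $k$ out-edge choices for $\ell-2$ steps, use simplicity to force the closing edge, divide by $\ell$, multiply by the monochromatic probability $k^{1-\ell}$, and bound the harmonic tail by $\ln n\le\log_2 n$. You also spell out two points the paper leaves implicit: every cycle of a pseudoforest $P_i$ is consistently oriented because each vertex has outdegree at most one in each color, and the probability $k^{1-\ell}$ follows from the uniform injective coloring at each vertex together with the cycle edges having distinct tails.
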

\begin{proof}
For each $3 \le \ell \le n$, we first bound the number of directed cycles of length $\ell$ in the supplied $k$-out orientation of $G$. To count them, first pick an arbitrary edge as the starting edge of the cycle. There are $m$ options for this edge. Then for the next $\ell-2$ edges, pick an outgoing edge of the current vertex. There are at most $k$ choices for each such edge. Finally, there is at most one edge that closes the cycle, since the graph is simple. Also, we overcount a cycle of length $\ell$ exactly $\ell$ times because of the choice of the first edge. Hence, there are at most $m k^{\ell-2}/\ell$ directed cycles of length $\ell$. The probability that a fixed directed cycle of length $\ell$ is monochromatic is $k^{1-\ell}$. Using linearity of expectation, the expected number of monochromatic cycles of length $\ell$ is at most $m k^{\ell-2}/\ell \cdot k^{1-\ell} = m/k \cdot 1/\ell$. Summing over all $\ell$:
\begin{equation*}
  \mathbb{E}[Z]
  \le \frac{m}{k}\sum_{\ell=3}^{n}\frac{1}{\ell} \le \frac{m}{k}\ln n.
\end{equation*}
Using $\ln n\leq\log_2 n$ concludes the proof.
\end{proof}

By \Cref{lem:cycles} and Markov's inequality, the probability that one iteration is rejected is at most
\begin{equation*}
  \Pr\!\left(Z>\frac{2m\log_2 n}{k}\right)
  \leq \frac{\mathbb{E}[Z]}{2m / k \log_2 n}
  \leq \frac12.
\end{equation*}
The loop reaches its limit without finding a coloring below the threshold only if all $\lceil 10\log_2 n\rceil$ trials are rejected. Since the trials are independent,
\begin{equation*}
  \Pr\!\left(\text{all }\lceil 10\log_2 n\rceil\text{ trials are rejected}\right)
  \leq 2^{-\lceil 10\log_2 n\rceil}
  \leq n^{-10}.
\end{equation*}
Thus, with probability at least $1-n^{-10}$, the coloring passed to the conversion has $Z\leq 2m/k\log_2 n$.

\subsection{Converting pseudoforests into forests}
\label{sec:conversion}

Let $P_1, \dots, P_k$ be the pseudoforests obtained at line~\ref{line:additive-pseudoforests}. For each pseudoforest, pick an arbitrary edge from each cycle and remove it, forming the edges $M_1, \dots, M_k$ where each $M_i$ forms a matching. Also let $H = M_1 \dotcup \dots \dotcup M_k$. We show that the algorithm of \citet{BlumenstockFischer} runs in deterministic time $\tO(m + nk + k|H|)$, and returns a forest partition $E(G) = F_1 \dotcup \dots \dotcup F_k \dotcup F_{k+1}$.

Set $F_i=P_i\setminus M_i$ for each $i\in[k]$. Following \citet{BlumenstockFischer}, we view $H$ as the \emph{surplus graph} on $V(G)$ and give every edge of $F_i$ and $M_i$ color $i$. The algorithm changes these sets while keeping them a partition of $E(G)$.

\begin{lemma}[Pseudoforest conversion]
\label{lem:conversion}
Let $E(G)=P_1 \dotcup \dots \dotcup P_k$ be a partition of the edges of $G$ into pseudoforests. Let $P_i = F_i \cup M_i$ for each $i\in[k]$, where $F_i$ is a forest and $M_i$ is a matching and let $H=M_1 \dotcup \dots \dotcup M_k$. Then the partition can be converted deterministically into $k+1$ forests in time $\tO(m+nk+k|H|)$.
\end{lemma}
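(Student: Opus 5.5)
We follow the augmenting-path scheme of \citet{BlumenstockFischer} and re-analyse its running time in terms of $|H|$ instead of $n$. The invariant is that $F_1,\ldots,F_k$ are forests, $F_{k+1}$ (initially empty) is a forest, and the surplus graph $H$ collects the edges not yet placed. Each surplus edge $e\in H$ has color $i$ and came from $M_i$. A \emph{round} takes the current surplus edges and tries to insert each into a forest. It performs this by a breadth-first search over ``swap'' steps of the matroid-union type. Inserting $e=uv$ into $F_j$ either succeeds, when $u,v$ are in different trees of $F_j$, or closes a unique cycle in $F_j$. In the latter case the search may swap $e$ with any edge of that cycle, which then becomes the new surplus edge to be placed. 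The target forest $F_{k+1}$ always accepts an edge whose endpoints are in different trees of $F_{k+1}$. Correctness, namely that the final sets are $k+1$ forests partitioning $E(G)$, we take from \citet{BlumenstockFischer}. The swaps preserve acyclicity along shortest augmenting sequences, which is the standard matroid-partition exchange argument, and termination follows because every successful augmentation reduces $|H|$ by one.

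The new content is the time bound. Their algorithm processes the surplus edges in phases. In each phase it builds, for every color $i$, a structure over the trees of $F_i$ in time $\tO(n)$ per color: a link-cut or Euler-tour representation supporting path queries, plus a scan of the edges in the current BFS layer. That costs $\tO(nk)$ per phase to initialize, plus $\tO(m)$ for scanning edges, since each edge enters the BFS at most once per phase. The key claim to establish is that the BFS from a surplus edge only ever labels edges lying in the components of the forests that touch surplus edges. Moreover, each labeled edge is charged either to an edge of $G$, scanned once, giving $\tO(m)$, or to a (surplus edge, color) pair, giving $\tO(k|H|)$. The latter charge covers the queries ``in which tree of $F_j$ is this endpoint, and what is the cycle,'' which cost $\tO(1)$ per color via dynamic-tree queries. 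We must avoid their $n$ phases and instead argue $O(\log n)$ or $\tO(1)$ phases, or alternatively one phase per surplus edge amortized. For the latter, the per-surplus-edge cost is $\tO(k)$ plus the edges newly labeled, which are charged globally. Here we rely on the fact from \citet{BlumenstockFischer} that labels are monotone, so each edge is labeled a bounded number of times across the whole run.

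The main obstacle is this amortization: showing that an edge of $G$ is not rescanned once per surplus edge, since that would give $m|H|$. We would first try to maintain the BFS forest incrementally across augmentations, in the style of Gabow's cyclic scanning. After an augmentation along a path, only the labels on that path change. We would reuse the dynamic-tree data structures so that each augmentation costs $\tO(k+\text{path length})$. Path lengths would be bounded by charging to distinct edges via the monotonicity of BFS distances. Combining the pieces gives $\tO(m+nk)$ for initialization and total scanning, plus $\tO(k|H|)$ for the per-surplus-edge, per-color queries, which is the claimed bound.
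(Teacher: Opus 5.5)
Your proposal has a genuine gap, and it sits in the part that carries the lemma's content: the time bound. Correctness alone is not the problem, since a matroid-union augmenting-path scheme with an extra, initially empty forest does end with $k+1$ forests here.

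Your time analysis, however, is a list of things that would have to be shown. You say you ``must'' get $\tO(1)$ phases or a per-surplus-edge amortization, and you name the single-scan amortization as ``the main obstacle'' without resolving it. The supports you offer do not close it.
\begin{itemize}
\item An augmenting search from $e=uv$ labels the whole $u$--$v$ path of every $F_j$ containing both endpoints. Each displaced edge then restarts the search in the other forests.
\item Each $F_j$ may be a spanning tree. So restricting to ``components of the forests that touch surplus edges'' restricts nothing, and nothing in your argument stops a single augmentation from labeling $\Omega(m)$ edges. What your argument actually supports is $\tO(m|H|)$, the very bound you call the obstacle.
\item Monotonicity of BFS distances, as in Hopcroft--Karp-type analyses, bounds phases or path lengths. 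It does not bound how often an edge is rescanned across augmentations.
\item The ``monotone labels'' fact, and the ``$n$ phases'' with $\tO(n)$-per-color rebuilds, are not properties of Blumenstock and Fischer's algorithm, which is not a matroid-union search over $G$.
\item Your scheme never uses the hypothesis that each $M_i$ is a matching. That hypothesis is what makes a bound in terms of $|H|$ possible.
\end{itemize}

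The missing idea is to work locally inside the surplus graph $H$, never searching $G$ after initialization, and to let $H$ itself become $F_{k+1}$.
\begin{itemize}
\item Blumenstock and Fischer's exchange acts on two incident edges of $H$. It preserves the invariant that each color class is a forest plus a matching.
\item Along a path in $H$ whose end edges share a color, $O(\text{length})$ exchanges move an edge out of $H$ (their Lemma~8).
\item A component of $H$ with more than $k$ edges repeats a color, and a search finds the repeat within $k+1$ edges. So each removal costs $\tO(k)$ with link-cut trees. Repeating until $H$ is colorful leaves components of at most $k$ edges.
\item Cycles of $H$ are then broken by cycle exchanges (their Lemmas~10--11). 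Each replaces a cycle edge by an edge joining two different components, so the number of edges on cycles of $H$ strictly drops.
\end{itemize}
Hence removals and cycle exchanges are both bounded by the initial $|H|$. Completed searches number $O(n+|H|)$ and touch $O(k)$ edges each. Adding the $\tO(m+nk)$ initialization gives $\tO(m+nk+k|H|)$.
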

\begin{proof}
If $H$ is empty, the sets $F_1,\ldots,F_k$ are already forests. If $k\geq m$, we may instead give each edge a distinct color. Otherwise, we use the algorithm of \citet{BlumenstockFischer} and count its operations in terms of the initial number of edges of $H$.

Their algorithm keeps a \emph{valid partition} into pseudoforests $P_1, \ldots, P_k$ with $P_i=F_i \dotcup M_i$ where $F_i$ is a forest and $M_i$ is a matching. Let $H = M_1 \dotcup \dots \dotcup M_k$ be the surplus graph. The algorithm will make $H$ a forest through a sequence of exchange operations and cycle exchanges. We say an edge has color $i$ if it belongs to $P_i$.

The exchange operation of \citet{BlumenstockFischer} acts on two incident edges of $H$. It either swaps their colors or moves at least one of them into a forest, while preserving a valid partition. Their Lemma~8 applies these exchanges along a path whose first and last edges have the same color. It moves at least one edge out of $H$ using a number of exchanges proportional to the path length.

First make $H$ \emph{colorful}, meaning that each of its connected components has at most one edge of each color. As in their Lemma~9, search a component until a color appears twice or the search finishes. A repeated color is found after examining at most $k+1$ edges. Applying Lemma~8 to a path between the two edges moves an edge out of $H$ with $O(k)$ exchanges. Repeat until every component is colorful. A colorful component has at most $k$ edges.

Next remove the cycles of $H$ using their Lemma~11. In a colorful cyclic component $C$, choose a vertex $v$ on a cycle. Their Lemma~10 gives a color $i$ present in $C$ such that $v$ has no neighbor in $C$ through an edge of $F_i$. Move color $i$ to an edge $uv$ on the cycle by the exchange operation. Either an edge leaves $H$, or $uv$ can be exchanged with the first edge $vw$ on the path from $v$ to $u$ in $F_i$. The choice of $i$ ensures that $w$ lies outside $C$. Thus the new edge $vw$ joins two different components of $H$ and lies on no cycle. This is their \emph{cycle exchange}. It preserves the valid partition. The joined components have at most $2k$ edges altogether. Apply the repeated-color search to these components again to keep $H$ colorful.

We now bound the number of these operations. Moving an edge from $H$ to a forest decreases the number of edges of $H$, and no operation increases that number. There are therefore at most as many such moves as there were edges initially. A cycle exchange removes an edge on a cycle and inserts an edge joining two different components. It decreases the number of edges that lie on cycles in $H$. Removing edges from $H$ cannot increase this number either. Thus there are also at most as many cycle exchanges as there were edges of $H$ initially. Once no cycle remains, set $F_{k+1}=H$. The valid partition condition gives the required partition into forests.

For the running time, initialize the union-find structures and link-cut trees of Sections~6 and~7 in $\tO(m+nk)$ time. Use the deterministic adjacency data structure in their Lemma~12. Its construction takes $\tO(m+n)$ time, and its queries and updates take polylogarithmic amortized time. Finding the color $i$ for a cycle exchange takes $O(k)$ adjacency queries, hence $\tO(k)$ time. Each sequence of $O(k)$ exchanges also takes $\tO(k)$ time.

It remains to count the searches. Each search that finds a repeated color is charged to an edge moved out of $H$. A search that finishes a component examines at most $k$ edges. Initially there are at most $n$ components. Removing an edge creates at most one additional component, and a cycle exchange requires revisiting only the components it joins. Thus the number of completed searches is $O(n+|H|)$, with $|H|$ measured initially. Clear only the vertices and colors visited by a search, so restarting it does not scan all $n$ vertices. The same count covers the cycle searches, each of which either finishes an acyclic component or leads to an edge removal or a cycle exchange. The total time is therefore $\tO(m+nk+k|H|)$.
\end{proof}

\subsection[Proof of Theorem~\ref*{thm:additive}]{Proof of \Cref{thm:additive}}
\label{sec:additivewrapper}

We now combine the two earlier bounds to prove \Cref{thm:additive}.

\begin{proof}[Proof of \Cref{thm:additive}]
We use \Cref{alg:additiveone}. Each run in the loop takes linear time and the loop iterates at most $O(\log n)$ times, giving a total time of $\tO(m)$.

Let $P_1, \ldots, P_k$ be the pseudoforest partition of the graph after the loop breaks. Choose one edge from each cycle and set $M_i$ to be the set of these chosen edges in $P_i$. 

By \Cref{lem:cycles}, with probability $1-n^{-10}$, we get $\sum_i |M_i| = O(m \log n / k)$. Then \Cref{lem:conversion} gives a valid partition into $k+1$ forests, in time $\tO(m+nk+k\sum_i |M_i|) = \tO(m+nk)$.
\end{proof}


\section{Vertex Reduction}
\label{sec:vertex-reduction}

We prove \Cref{thm:star}. The input is a simple graph
$H$, a specified vertex $r$, and a partition of $E(H-r)$ into
$k\geq 1$ forests $F_1,\ldots,F_k$. We seek a partition of
$E(H)$ into $k$ forests and may recolor the edges of $H-r$.
On success, the output is $(F'_1,\ldots,F'_k)$ with
$E(H)=\bigsqcup_{i=1}^k E(F'_i)$. On failure, the output is
a set $X\subseteq V(H)$ with $|E_H(X)|>k(|X|-1)$.
This inequality rules out a $k$-forest partition, since each
forest has at most $|X|-1$ edges inside $X$.

Write $n=|V(H)|$, $m=|E(H)|$, and $U=\{rv\in E(H)\}$.
We call the edges of $H-r$ original edges and those of $U$
star edges. We construct a directed flow network $D$ with
$2(n-1)k+n+1$ vertices and $4(n-1)k+|U|+1$ edges.
We compute one minimum cost flow in $D$, then use its edge
flows to construct $F'_1,\ldots,F'_k$ or obtain $X$ from a cut.
We prove that the returned color classes are forests and
bound the running time.

\subsection{Build the flow network}

The terminals of $D$ are a source $s$ and a sink $t$. Its other
vertices are one vertex $h_v$ for each $v\in V(H)\setminus\{r\}$,
and two vertices
$P_{v,i},Q_{v,i}$ for every such $v$ and every $i\in[k]$.

Add $r$ as an isolated vertex to each $F_i$ and root its
singleton component at $r$. Root every other tree of $F_i$
arbitrarily and direct its edges toward the chosen root.
Let $d_i(v)$ be the depth of $v$ in that rooted tree, with
depth zero at its root. Set
\[
  w_i(v)=2m-d_i(v).
\]
If $u\to v$ is an edge of $F_i$, then
$d_i(u)=d_i(v)+1$ and hence
\begin{equation}\label{eq:depth}
  w_i(u)=w_i(v)-1.
\end{equation}

Add the following directed edges to $D$.
\begin{enumerate}
\item For every original edge directed $u\to v$ in $F_i$, add
  $P_{u,i}\to Q_{v,i}$. Its capacity is one and its cost is zero.
\item For every $v\neq r$ and $i\in[k]$ such that $v$ has no
  outgoing edge in $F_i$, add $P_{v,i}\to t$. Its capacity is
  one and its cost is zero.
\item For every $v\neq r$ and $i\in[k]$, add
  $Q_{v,i}\to P_{v,i}$, $Q_{v,i}\to h_v$, and
  $h_v\to P_{v,i}$. Each has capacity $m+1$. Only
  $Q_{v,i}\to h_v$ has a nonzero cost, equal to $w_i(v)$.
\item For each $rv\in U$, add an edge $s\to h_v$
  with capacity one and cost zero.
\item Add $s\to t$ with capacity $|U|$ and cost
  $2m|E(H-r)|+1$.
\end{enumerate}

Flow can use $Q_{v,i}\to h_v$ only after an original edge of
$F_i$ enters $Q_{v,i}$. Then $m\geq 1$ and $d_i(v)\leq m$,
so its cost $w_i(v)$ is positive. A directed cycle that avoids
these edges cannot visit any $h_v$, since the only other edge
entering $h_v$ comes from $s$. It would stay in one color and
follow a rooted forest, which has no directed cycle. Hence
every directed cycle has positive cost, and a minimum cost
flow has no directed circulation.

We send $|U|$ units from $s$ to $t$. A unit on $s\to h_v$
selects the edge $rv$. The flow on $s\to t$ counts the
unselected edges of $U$. Every path starting with $s\to h_v$
ends at a vertex
$P_{w,i}$ with no outgoing edge of $F_i$ and then uses
$P_{w,i}\to t$.

Let $\varphi$ be an integral minimum cost flow of value
$|U|$. Such a flow exists because $s\to t$ can carry all
$|U|$ units. Write $\varphi(s,t)$ for the flow on that
direct edge. The number of selected edges of $U$ is
$|U|-\varphi(s,t)$. Each use of a positive cost edge
$Q_{v,i}\to h_v$ follows a distinct original edge, and
costs at most $2m$. Thus all edges other than $s\to t$
together cost at most $2m|E(H-r)|$. One unit on $s\to t$
costs more than this bound. Consequently, $\varphi$
minimizes $\varphi(s,t)$ and selects as many edges of $U$
as possible. Subject to that choice, it minimizes the sum
of the costs on edges $Q_{v,i}\to h_v$.

\subsection{Construct the color classes from the flow}

To construct $F'_1,\ldots,F'_k$, match the units entering
and leaving each $Q_{v,j}$ and each $h_v$. Flow conservation
guarantees these matchings. A unit at $Q_{v,j}$ carries the
original edge that led to $Q_{v,j}$. A unit entering $h_v$
from $s$ carries $rv$. After matching, each such unit reaches
a definite vertex $P_{v,i}$ next.

Start with empty sets $F'_1,\ldots,F'_k$ and apply these rules.
\begin{enumerate}
\item For $u\to v$ in $F_j$, if $P_{u,j}\to Q_{v,j}$ carries
  no flow, put $uv$ in $F'_j$ and retain its direction.
\item For $u\to v$ in $F_j$, if $P_{u,j}\to Q_{v,j}$ carries
  a unit, let $P_{v,i}$ be the next vertex reached by that
  unit. Put $uv$ in $F'_i$ and direct it $v\to u$.
\item For $rv\in U$, if $s\to h_v$ carries a unit, let
  $P_{v,i}$ be its next vertex. Put $rv$ in $F'_i$ and
  direct it $v\to r$.
\end{enumerate}
Every edge of $H-r$ appears in one $F'_i$. An edge of $U$
appears exactly when its source edge $s\to h_v$ carries flow.

\begin{algorithm}[t!]
\caption{$\operatorname{VertexReduction}(H,r,F_1,\ldots,F_k)$}
\label{alg:vertex-reduction}
\KwIn{$H,r,(F_i)_{i=1}^k$ with $E(H-r)=\bigsqcup_{i=1}^k E(F_i)$}
\KwOut{A $k$-forest partition of $E(H)$ or an impossibility certificate}
$U\gets\{rv\in E(H)\}$, $m\gets |E(H)|$\;
Root $\{r\}$ at $r$ and the other components of each $F_i$ arbitrarily\;
$d_i(v)\gets\operatorname{depth}_{F_i}(v)$, $w_i(v)\gets 2m-d_i(v)$ for $v\neq r$, $i\in[k]$\;
$V(D)\gets\{s,t\}\cup\{h_v:v\neq r\}$\;
$V(D)\gets V(D)\cup\{P_{v,i},Q_{v,i}:v\neq r,\ i\in[k]\}$\;
$E(D)\gets\varnothing$\;
\ForEach{$i\in[k]$, $u\to v\in F_i$}{
  $\operatorname{add}_D(P_{u,i},Q_{v,i},1,0)$\;
}
\ForEach{$v\neq r$, $i\in[k]$}{
  \If{$d^+_{F_i}(v)=0$}{$\operatorname{add}_D(P_{v,i},t,1,0)$\;}
  $\operatorname{add}_D(Q_{v,i},P_{v,i},m+1,0)$\;
  $\operatorname{add}_D(Q_{v,i},h_v,m+1,w_i(v))$\;
  $\operatorname{add}_D(h_v,P_{v,i},m+1,0)$\;
}
\ForEach{$rv\in U$}{$\operatorname{add}_D(s,h_v,1,0)$\;}
$\operatorname{add}_D(s,t,|U|,2m|E(H-r)|+1)$\;
$\varphi\gets\operatorname{MinCostFlow}(D,s,t,|U|)$\;
\eIf{$\varphi(s,t)=0$}{
  Match flow units at each $Q_{v,i}$ and $h_v$\;
  $F'_i\gets\varnothing$ for $i\in[k]$\;
  \ForEach{$j\in[k]$, $u\to v\in F_j$}{
    \eIf{$\varphi(P_{u,j},Q_{v,j})=0$}{
      $F'_j\gets F'_j\cup\{uv\}$\;
    }{
      $i\gets\operatorname{NextColor}(\varphi,uv)$\;
      $F'_i\gets F'_i\cup\{uv\}$\;
    }
  }
  \ForEach{$rv\in U$}{
    $i\gets\operatorname{NextColor}(\varphi,rv)$\;
    $F'_i\gets F'_i\cup\{rv\}$\;
  }
  \Return $(F'_1,\ldots,F'_k)$\;
}{
  \Return $\operatorname{CutCertificate}(D,\varphi)$\;
}
\end{algorithm}

Each $P_{v,i}$ has exactly one outgoing edge in $D$, of
capacity one. If that edge carries no flow, no unit reaches
$P_{v,i}$; any original edge directed out of $v$ in color
$i$ stays unchanged. If it carries one unit, any such
original edge is reversed, and the arriving unit supplies
one new edge directed out of $v$ in color $i$. Thus each
color has outdegree at most one. Every edge incident with
$r$ is directed toward $r$, so $r$ is a sink. We prove
below that the color classes have no cycles.

In \Cref{alg:vertex-reduction},
$\operatorname{add}_D(x,y,c,w)$ adds $x\to y$ to $D$ with
capacity $c$ and cost $w$. After matching flow units as above,
$\operatorname{NextColor}(\varphi,e)$ is the
index $i$ of the next vertex $P_{v,i}$ reached by the unit
for $e$.

\subsection{Why failure means no partition exists}

Assume $\varphi(s,t)>0$. To prove that no partition exists,
form a smaller network $\bar D$ with vertices
$\{s,t\}\cup V(H-r)$. For each original edge $u\to v$,
add $u\to v$ with capacity one. For each $rv\in U$, add
$s\to v$ with capacity one. For every $v\neq r$, add
$v\to t$ with capacity $k-d^+_{H-r}(v)$, where
$d^+_{H-r}(v)$ counts original edges directed out of $v$.
Each $F_i$ has at most one such edge, so this capacity is
nonnegative.

Remove the flow on $s\to t$ from $\varphi$. Replace each
$h_v,P_{v,i},Q_{v,i}$ by $v$ and discard the resulting loops.
This projects the remaining flow to $\bar D$ with value
$f=|U|-\varphi(s,t)$. We show that $f$ is a maximum flow
value in $\bar D$. Take any integral flow in $\bar D$.
At $v$, its used outgoing original edges have distinct
colors, since each $F_i$ has at most one outgoing edge there.
Assign its units on $v\to t$ to distinct colors without an
outgoing original edge at $v$. Match the incoming units at
$v$ to these outgoing colors. A used original edge $u\to v$
of color $j$ uses $P_{u,j}\to Q_{v,j}$. An incoming unit of
color $j$ matched to color $i$ uses
$Q_{v,j}\to P_{v,j}$ if $i=j$, and
$Q_{v,j}\to h_v\to P_{v,i}$ otherwise. A unit from $rv$
uses $s\to h_v\to P_{v,i}$. A unit on $v\to t$ assigned
color $i$ uses $P_{v,i}\to t$. These routes lift the flow to
$D$. Send any remaining units through $s\to t$ to obtain
value $|U|$. Since $\varphi$ minimizes the flow on $s\to t$,
its projected value $f$ is maximum in $\bar D$.

Because $\varphi(s,t)>0$, we have $f<|U|$. In the residual
network of $\bar D$, let $Y$ be the vertices of $H-r$
reachable from $s$, and put $X=Y\cup\{r\}$. The cut has
capacity $f$. The edges $v\to t$ with $v\in Y$ have total
capacity $k|Y|-\sum_{v\in Y}d^+_{H-r}(v)$. This sum of
outdegrees counts every edge inside $Y$ once and every
original edge leaving $Y$ once. Adding the capacities of
the original edges leaving $Y$ therefore gives
$k|Y|-|E_{H-r}(Y)|$. The edges $s\to v$ crossing the cut
contribute $|U|-|U\cap E_H(X)|$. Hence
\begin{equation}\label{eq:starcut}
  f=|U|-\bigl(|E_H(X)|-k(|X|-1)\bigr).
\end{equation}
Since $f<|U|$, this gives $|E_H(X)|>k(|X|-1)$, and hence
$|X|\geq 2$. Each forest has at most $|X|-1$ edges inside
$X$, so $E(H)$ cannot be partitioned into $k$ forests.
$\operatorname{CutCertificate}(D,\varphi)$ returns this $X$.

\subsection{Why each new color class is a forest}

The outdegree bound does not exclude a directed cycle. We use
the optimality of $\varphi$ to exclude one. First, a unit
entering $h_v$ through $Q_{v,j}\to h_v$ cannot leave through
$h_v\to P_{v,j}$. Replacing those two edges by
$Q_{v,j}\to P_{v,j}$ would reduce its cost by $w_j(v)>0$.
Thus a passage through $h_v$ changes an original color.

Write $*$ for a unit entering from the star. The cost of
assigning an incoming unit of type $j$ to color $i$ at $v$ is
\begin{equation}\label{eq:switchcost}
  c_v(j,i)=
  \begin{cases}
    0 & \text{if $j=i$},\\
    w_j(v) & \text{if $j\in[k]$ and $j\neq i$},\\
    0 & \text{if $j=*$}.
  \end{cases}
\end{equation}
For the acyclicity proof, replace the edges incident with
each $h_v$ in $D$ and
the direct edges $Q_{v,j}\to P_{v,j}$ by edges
$Q_{v,j}\to P_{v,i}$ for all $i,j\in[k]$. Give each such
edge capacity $m+1$ and cost $c_v(j,i)$. For each $rv\in U$,
replace $s\to h_v$ by $s\to Q_{v,*}$ of capacity one and
cost zero, and add $Q_{v,*}\to P_{v,i}$ of capacity $m+1$
and cost zero for every $i$. Keep all other edges of $D$.
Call each edge from a $Q$ vertex to a $P$ vertex a
transition. This expanded network is used only in the proof.

The matchings used to decode $\varphi$ turn it into a flow
of the same value and cost in the expanded network. Conversely,
an expanded minimum cost flow can be chosen without directed
circulations, since all its edge costs are nonnegative. Route
its color changes and units from $U$ through $h_v$ in $D$,
using the direct edge when $i=j$. At any $v$, at most $m$
used original or star edges enter, so each edge at $h_v$ has enough
capacity. This routing preserves cost.
Hence the expanded flow obtained from $\varphi$ is minimum
cost. We use its residual network below.

A minimum cost flow has no negative cost closed walk in its
residual network, since such a walk contains a negative cost
cycle. Each transition into $P_{v,i}$ has capacity $m+1$.
At most one unit enters $P_{v,i}$, because its only outgoing
edge has capacity one. Thus every transition into $P_{v,i}$
has unused forward capacity.

Consider an edge $v\to w$ of decoded color $i$ that avoids
$r$. It came from $H-r$. We associate with it a residual
walk from $P_{v,i}$ to $P_{w,i}$. If it kept its original
direction, it belonged to $F_i$ and its network edge
$P_{v,i}\to Q_{w,i}$ is unused. This edge followed by
$Q_{w,i}\to P_{w,i}$ has cost zero. If it was originally
$w\to v$ in $F_i$ and was reversed, reverse its used
transition $Q_{v,i}\to P_{v,i}$ and its used original edge
$P_{w,i}\to Q_{v,i}$. This walk also has cost zero.

Suppose the edge was originally $w\to v$ in $F_j$ with
$j\neq i$. Its flow uses
$P_{w,j}\to Q_{v,j}\to P_{v,i}$. A used transition
$Q_{w,h}\to P_{w,j}$ enters $P_{w,j}$, where $h\in[k]$
or $h=*$. Reverse these three used edges, then use the
forward transition $Q_{w,h}\to P_{w,i}$. The resulting
residual walk is
\begin{equation}\label{eq:residualwalk}
  P_{v,i}\longrightarrow Q_{v,j}\longrightarrow P_{w,j}
  \longrightarrow Q_{w,h}\longrightarrow P_{w,i}
\end{equation}
Its cost is $-w_j(v)-c_w(h,j)+c_w(h,i)$. If $h=j$,
\Cref{eq:depth} makes this $-1$. If $h=i$, it is
$-w_j(v)-w_i(w)<0$. For every other $h$, including $*$,
it is $-w_j(v)<0$. Thus an edge that changes color gives
a negative cost residual walk.

If $F'_i$ contained a cycle, its outdegree bound would
orient that cycle consistently. The cycle cannot contain
the sink $r$. At least one edge of it changed color, since
$F_i$ had no cycle. Concatenating the residual walks of its
edges would give a negative cost closed walk, a contradiction.
Hence every $F'_i$ is a forest.

\subsection{Size and running time}

The vertex list gives the stated vertex count. In color $i$,
exactly $n-1-|E(F_i)|$ vertices have no outgoing original
edge. Thus the first two edge groups contain $k(n-1)$
edges in total. The three edges for each pair $(v,i)$
contribute $3k(n-1)$, the source edges contribute $|U|$,
and $s\to t$ contributes one. This gives the stated edge count.

Rooting the forests, building $D$, decoding its flow, and
constructing $\bar D$ take $O(nk+m)$ time. Since each $F_i$
has at most $n-1$ edges and $|U|\leq n-1$, we have
$m\leq(k+1)(n-1)=O(nk)$. All capacities are at most $m+1$,
the flow value is at most $m$, and all costs are at most
$2m^2+1$. By \citep{van2023deterministic}, the integral minimum cost flow in $D$
takes $(nk+m)^{1+o(1)}$ deterministic time. Computing the
residual cut and the output fits within the same bound.

\section{The exact algorithm and its two time bounds}
\label{sec:exact}

For a given $k\geq 1$, we first obtain the core of $G$ by repeatedly
peeling vertices of current degree at most $k$. This reduction
preserves whether a partition into $k$ forests exists. Given a
$k$-out orientation of the core, we provide two approaches that
either construct a partition of the core into $k$ forests,
which extends to $G$, or certify that no such partition exists. In the
latter case, this guarantees the existence of a density witness,
that is, a vertex set $X$ with $|X|\geq 2$ and
$|E_G(X)|>k(|X|-1)$.
Of the two approaches, \emph{deletion and star restoration} are efficient for large $k$;
\emph{rooted tree construction} is efficient for small $k$.
We choose between them using the peeled core's vertex count,
prove correctness, and derive both running-time bounds.

Finally, the wrapper in \Cref{sec:wrapper} uses binary search to find
the minimum outgoing bound $k$ and a corresponding orientation.
This guarantees that $a(G)\in\{k,k+1\}$. We then try to partition
$G$ into $k$ forests using one of the two approaches: a successful partition is optimal,
while a negative answer is followed by a call to $\ExactK$ at $k+1$,
which constructs an optimal partition using the same orientation. The parameter
$k$ need not be the minimum outgoing bound before this final wrapper.

\subsection{Peeling and the core}
\label{sec:exactpeeling}

Vertices of degree at most $k$ are easy to add to an existing partition
into $k$ forests: assign their incident edges distinct forest labels,
so that each forest gains at most one edge incident with the new vertex
and no cycle is created. We therefore repeatedly remove such vertices
and postpone adding them back until the remaining graph has been
partitioned. This \emph{peeling} process leaves an induced graph of
minimum degree at least $k+1$, called the \emph{core}.

\begin{lemma}[Peeling and reverse restoration]
\label{lem:peeling}
For a threshold $b\geq 1$, repeatedly remove a vertex of current
degree at most $b$, recording its edges to still-present vertices.
The procedure takes $O(n+m)$ time. If the remaining induced core
has $N$ vertices and $M$ edges and has a $b$-out orientation, then
$$
  (b+1)N\leq 2M
  \qquad\text{and}\qquad M\leq bN.
$$
Any forest partition of the core with at least $b$ available labels
extends to the whole graph in linear time by reverse restoration.
Every induced-density violation in the core remains valid in $G$.
\end{lemma}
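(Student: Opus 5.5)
The proof has four parts, taken in order: the running time, the two inequalities, the extension of forest partitions, and the transfer of density violations.

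\emph{Running time.} Maintain the current degree of every vertex together with a stack (or bucket) of vertices whose current degree is at most $b$. Popping a vertex $v$, we record its edges to still-present vertices, mark $v$ deleted, and decrement the degrees of its present neighbours. Any neighbour whose degree drops to $b$ is pushed. Each edge is processed $O(1)$ times, so the procedure runs in $O(n+m)$ time. When the stack empties, the remaining induced subgraph is the core, and every core vertex has core degree at least $b+1$.

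\emph{The two inequalities.} Summing degrees over the core gives $2M=\sum_v \deg(v)\geq (b+1)N$. For the upper bound, a $b$-out orientation assigns each edge to its tail, and each vertex is the tail of at most $b$ edges, so $M\leq bN$.

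\emph{Extension.} Suppose the core is partitioned into forests using at least $b$ labels. Restore the peeled vertices in reverse order of removal. When $v$ is restored, the recorded edges go to vertices that are already present, since they were still present at $v$'s removal and are removed later, hence restored earlier. There are at most $b$ such edges, so we give them distinct labels. Each forest then gains at most one edge at the new vertex $v$, which is a leaf attachment to a vertex not yet in that forest's component through $v$. No cycle can form, because any cycle through $v$ would need two edges of the same label at $v$. This costs $O(1)$ per edge, so linear time overall.

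\emph{Density violations.} If $X$ is contained in the core, then $G[X]$ equals the core's induced subgraph on $X$, so $|E_G(X)|=|E_{\mathrm{core}}(X)|$ and the violation $|E_G(X)|>k(|X|-1)$ holds verbatim in $G$.

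The only point needing care is the order in which recorded edges are restored, which makes every recorded neighbour present at restoration time; the rest is routine.
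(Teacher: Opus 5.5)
Your proposal is correct and follows the paper's proof step by step: queue-based peeling with degree updates, the degree-sum and outdegree-sum inequalities, reverse restoration that gives the at most $b$ recorded edges distinct labels, and the observation that the core is induced, so witnesses transfer verbatim. The paper also states two facts you only use implicitly: every non-core edge is recorded by exactly one endpoint (the one peeled first) and so is restored exactly once, and hence $v$ is isolated in every forest when it is restored, which your ``two edges of the same label at $v$'' argument needs.
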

\begin{proof}
A queue of vertices of degree at most $b$, with degree updates
after each removal, visits each edge a constant number of times.
Every vertex left in the core has degree at least $b+1$.
The degree sum proves the first inequality, and summing the restricted
orientation's outdegrees proves the second.

Restore vertices in reverse removal order. At its restoration a
vertex has at most $b$ recorded edges, all to already present
vertices. Give these edges distinct labels among the available
$b$ labels. Each forest gains at most one edge incident with
a previously isolated vertex, so no cycle is created. Every edge
was recorded once and is restored once. Finally, the core is
induced, so its internal edges on a witness set are exactly those
of $G$.
\end{proof}

Apply this procedure with threshold $b=k$, and let $Q$ be the
remaining core, with $N=|V(Q)|$ and $M=|E(Q)|$. The supplied
orientation restricts to a $k$-out orientation of $Q$, so
$$
  (k+1)N\leq 2M
  \qquad\text{and}\qquad M\leq kN.
$$
Thus $M=\Theta(kN)$ on a nonempty core. If the core is empty,
reverse restoration directly constructs a partition into $k$ forests.
Otherwise, we apply one of the two approaches below to $Q$.
A successful partition extends to $G$ by reverse restoration, and
any density witness in $Q$ remains valid in $G$.

\subsection{First approach: Deletion and star restoration}
\label{sec:deletion}

Recall that the additive-one construction yields $k$ forests when supplied
with a $(k-1)$-out orientation. To reach this bound, a vertex of
outdegree $k$ must either be deleted or lose an outgoing neighbor.
The next lemma expresses these requirements as a hitting-set problem.

\begin{lemma}[A small deterministic deletion set]
\label{lem:deletion}
Let $Q$ have $N$ vertices, $M$ edges, and a $k$-out
orientation, with $k\geq 1$. A set of vertices $R$ with
$$
  |R|=O\bigl(1+(N/k)\log(N+1)\bigr)
$$
can be found deterministically in $\tO(M+N)$ time so that
the restricted orientation of $Q-R$ is $(k-1)$-out.
\end{lemma}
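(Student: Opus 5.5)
We reduce the requirement to a hitting-set instance and solve it with the standard greedy set-cover algorithm. Let $T$ be the set of \emph{tight} vertices, those of outdegree exactly $k$ in the given orientation of $Q$. Deleting a vertex $x$ from $Q$ reduces to at most $k-1$ the outdegree of every remaining tight vertex $v$ with $v=x$ or $v\to x$. So it suffices to choose $R$ hitting every closed out-neighbourhood $S_v=\{v\}\cup N^+(v)$ for $v\in T$. Such a choice works because every vertex of $Q-R$ either was not tight, and its outdegree can only drop, or lost an out-neighbour in $R$. Each $S_v$ has size exactly $k+1$, and the total size of all sets is $O(M+N)$.

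The size bound comes from a fractional cover. Assigning weight $1/(k+1)$ to every vertex covers each $S_v$ with total weight $1$, giving a fractional hitting set of value $N/(k+1)$. The greedy algorithm (repeatedly pick the vertex lying in the most currently unhit sets) achieves an $H_{k+1}$, or more crudely $O(\log(N+1))$, approximation to the optimal fractional value. Hence $|R|\le (1+\ln(k+1))\,N/(k+1)=O(1+(N/k)\log(N+1))$.

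I would reprove the greedy bound directly rather than cite it. If $u$ sets remain unhit, each has $k+1$ elements, so some vertex lies in at least $u(k+1)/N$ of them. Each greedy step therefore multiplies $u$ by at most $1-(k+1)/N$. After $\lceil (N/(k+1))\ln(|T|+1)\rceil$ steps fewer than one set remains. Since $|T|\le N$, this gives the bound, with the additive $1$ covering small cases such as $N<k+1$.

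The main obstacle is implementing greedy deterministically in $\tO(M+N)$ time. For this I would use the standard bucket implementation:
\begin{itemize}
\item maintain for each vertex $x$ its count $c(x)$ of unhit sets containing it (incidence lists, $x\in S_v$ iff $x=v$ or $v\to x$);
\item keep vertices in buckets indexed by $c(x)$, with a pointer to the maximum nonempty bucket, which only decreases;
\item when $x$ is picked, mark each unhit $S_v\ni x$ as hit, and for each element $y$ of such an $S_v$ decrement $c(y)$ and move $y$ down one bucket.
\end{itemize}
Each set is marked hit once and its $k+1$ elements are scanned once, so the total work is $O(\sum_v|S_v|+N)=O(M+N)$, plus the pointer decreases, which total $O(N)$. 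This gives linear, hence $\tO(M+N)$, time and completes the proof.
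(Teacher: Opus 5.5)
Your proposal is correct and follows the paper's proof closely. It uses the same hitting-set reduction to the sets $\{v\}\cup N^+(v)$ for vertices of outdegree $k$, the same averaging argument that each greedy pick shrinks the number of unhit sets by a factor $1-(k+1)/N$, and the same incidence-count implementation; your bucket queue gives $O(M+N)$ time where the paper uses a max-heap.

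One small caution concerns your remark that greedy is an $H_{k+1}$-approximation to the fractional optimum: this is not true in general. Greedy's guarantee depends on the largest number of sets containing a single vertex, not on the set size $k+1$. The bound $(1+\ln(k+1))N/(k+1)$ does still hold here, by running your decay argument until about $N/(k+1)$ sets remain and then counting one set per step. In any case, your direct $\ln(N+1)$ argument already suffices for the lemma.
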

\begin{proof}
For every vertex $v$ of outdegree $k$, form the constraint
$A_v=\{v\}\cup N^+(v)$. By the preceding observation, it suffices
to find a set $R$ hitting all these constraints. If there are none,
take $R=\varnothing$.

Otherwise, greedily choose a vertex contained in the most unhit
constraints. Each constraint has $k+1$ members, so when $s$
constraints remain, the chosen vertex hits at least $(k+1)s/N$.
Starting with $s_0\leq N$ constraints, after $t$ choices at most
$$
  s_0\exp\left(-\frac{(k+1)t}{N}\right)
$$
remain. Hence $O(1+(N/k)\log(N+1))$ choices suffice.

Maintain the incidence lists and a maximum heap of vertex counts.
Whenever a constraint is hit, decrement the counts of its members.
There are $(k+1)s_0\leq M+N$ incidences, each processed once,
so the total time is $\tO(M+N)$.
\end{proof}

Apply the additive-one construction to $Q-R$ with its restricted
$(k-1)$-out orientation to obtain $k$ forests. Restore the vertices
of $R$ one at a time. Each restoration adds a star to the current
induced graph, so \Cref{thm:star} either extends the partition,
possibly recoloring old edges, or rules out a partition into $k$
forests. After all vertices of $R$ have been restored, reverse
restoration of the peeled vertices extends the partition to $G$.

\subsection{Second approach: Rooted tree construction}
\label{sec:small}

The alternative branch builds a directed graph with indegree exactly
$k$ at every original vertex. Its total number of arcs then equals
the number used by $k$ spanning trees. This equality ensures that
a successful packing covers every original edge, while an incoming
cut of size less than $k$ guarantees that no $k$-forest partition exists.

\begin{lemma}[Exact core partition by rooted tree construction]
\label{lem:small}
If $Q$ is a nonempty core obtained by peeling vertices of degree
at most $k$, and has a supplied $k$-out orientation, then a
$k$-forest partition can be constructed or its impossibility determined
in $\tO(kM)$ deterministic time.
\end{lemma}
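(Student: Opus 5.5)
We reduce the question to packing $k$ edge-disjoint spanning arborescences in an auxiliary digraph, and solve that packing with Gabow's matroid-partition algorithm~\cite{gabow1991matroid}.

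\textbf{Construction of the digraph.} Let $Q$ have vertex set $V(Q)$ with $N$ vertices and $M$ edges. Add a new root vertex $z$ and build a directed graph $\vec D$ as follows.
\begin{itemize}
\item Orient every edge of $Q$ by the supplied $k$-out orientation, and reverse it, so that each vertex $v$ has indegree $d^+(v)\le k$ from original arcs.
\item For every $v\in V(Q)$, add $k-d^+(v)$ parallel arcs $z\to v$.
\end{itemize}
Now every original vertex has indegree exactly $k$, $z$ has indegree $0$, and the total number of arcs is $kN$. This is exactly the number of arcs in $k$ spanning arborescences rooted at $z$ on $N+1$ vertices. Since $M\ge (k+1)N/2$ on a nonempty core, we have $kN=O(M)$.

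\textbf{Correspondence with forest partitions.} We claim that $Q$ has a $k$-forest partition if and only if the arcs of $\vec D$ decompose into $k$ arc-disjoint spanning $z$-arborescences.

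The easy direction: in such a decomposition, every arc is used, because the arc count matches. Deleting $z$ from the $i$-th arborescence leaves a forest, so the $k$ arborescences yield a $k$-forest partition of $E(Q)$.

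For the converse we use Edmonds' branching theorem. The decomposition exists if and only if every nonempty $Y\subseteq V(Q)$ has at least $k$ arcs entering it. The arcs entering $Y$ are all $k|Y|$ arcs with head in $Y$, minus those with both ends in $Y$, which are exactly the edges of $E_Q(Y)$. So the in-cut of $Y$ has size $k|Y|-|E_Q(Y)|$. This is at least $k$ if and only if $|E_Q(Y)|\le k(|Y|-1)$. By Nash--Williams, this condition holding for all $Y$ is equivalent to $a(Q)\le k$. Hence a deficient in-cut $Y$ is precisely a density witness (with $|Y|\ge 2$ forced automatically), and by \Cref{lem:peeling} that witness remains valid in $G$.

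\textbf{Algorithm.} Run Gabow's algorithm for packing $k$ arc-disjoint spanning arborescences rooted at $z$ on $\vec D$. It runs in $\tO(k\cdot|A(\vec D)|)=\tO(k\cdot kN)=\tO(kM)$ deterministic time. It either returns the packing, which we convert to forests as above, or halts with a set $Y$ whose in-cut is smaller than $k$.

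\textbf{Main obstacle.} The delicate step is ensuring that Gabow's procedure actually produces the deficient set $Y$ on failure, or at least a reliable certificate of impossibility, within the stated time bound. To handle this, I would argue from the matroid-union augmenting-path search: when augmentation fails, the set of vertices reached by the final search gives a tight set. The counting above then turns that tight set into the density witness.
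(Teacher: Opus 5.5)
Your digraph (reversed orientation, root $z$ with $k-d^+(v)$ parallel arcs, indegree exactly $k$, $kN$ arcs), the in-cut count $k|Y|-|E_Q(Y)|$, and the extraction of forests by deleting the root are the paper's construction. Your appeal to Edmonds' theorem is a correct way to justify the exactness that the paper takes from Gabow's procedure.

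The gap is in the running-time step. You run Gabow's algorithm for \emph{packing $k$ arc-disjoint spanning arborescences} and assert that it takes $\tO(k\cdot|A(\vec D)|)$ time. Gabow's arborescence packing does not achieve that bound. His routine that actually splits the arcs into $k$ arborescences runs in $O((kn)^2)$ time, which here is $O(k^2N^2)$. That is a factor of about $N$ above $kM=\Theta(k^2N)$. In the regime where this branch is used ($k^2\le k+N$, so $k$ can be about $\sqrt N$), this is of order $N^3$. That would break both the $n^2$ and the $m^{4/3}$ bounds downstream.

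The $\tO(k\cdot|A(\vec D)|)$ bound belongs to the first phase of Gabow's matroid method, the phase behind his $O(\lambda m\log(n^2/m))$ connectivity algorithm. That phase either exhibits an in-cut of size below $k$, or returns $k$ arc-disjoint spanning trees of the \emph{underlying undirected} multigraph with indegree $k$ at every non-root vertex. It does not decompose them into arborescences. The paper stops at that phase, and you can too, because your extraction never uses directions. Deleting $z$ from any undirected spanning tree on $N+1$ vertices already leaves a forest on $V(Q)$, and $k$ such trees with $N$ edges each exhaust all $kN$ arcs. With that substitution your proof coincides with the paper's.

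Your \emph{main obstacle} is not an obstacle for this lemma. The statement only asks you to determine impossibility, and by your equivalence that already follows when the exact procedure fails. The paper simply takes the deficient in-cut that Gabow's construction reports.
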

\begin{proof}
Reverse the supplied orientation, add a root $r$, and add
$k-d_Q^+(v)$ parallel arcs $r\to v$ for each $v$.
Every original vertex has indegree $k$, no arc enters $r$,
and the total arc count is
$$
  A=M+\sum_v(k-d_Q^+(v))=kN.
$$
For a nonempty $X\subseteq V(Q)$, its incoming cut has size
$$
  k|X|-|E_Q(X)|.
$$
Apply Gabow's rooted tree construction~\cite{gabow1991matroid} directly to this rooted instance.
A cut with fewer than $k$ incoming arcs implies
$|E_Q(X)|>k(|X|-1)$, ruling out a $k$-forest partition.

Otherwise the construction returns $k$ arc-disjoint underlying
spanning trees on $N+1$ vertices. Each has $N$ arcs, so together
they use $kN$ arcs, exactly all arcs of the instance. Deleting $r$
from each tree removes the added arcs and leaves a forest on the
original vertices. Ignoring directions, every original edge appears
in exactly one of these forests.

The core inequality gives $kN\leq 2M$. Since $N\geq 1$,
we also have $k\leq kN\leq 2M$. Hence $A+k(N+1)=O(M)$,
and the construction takes $\tO(kM)$ time, including the added
parallel arcs.
\end{proof}

\subsection{Choosing between the two approaches}
\label{sec:exactk}
\label{sec:bounds}

On a nonempty core, the two branches have cost multipliers $k$ and
$1+N/k$, respectively. We bound their subroutine costs using one
common subpolynomial factor in the original edge count $m$.
This permits the algorithm to choose the smaller numerical
multiplier and yields both claimed bounds for the same execution.

The procedure $\ExactK(G,k,\mathcal O)$ takes a simple graph,
$k\geq 1$, and a supplied $k$-out orientation $\mathcal O$.
It returns a partition into $k$ forests or \textnormal{\textsc{No}}, guaranteeing
that no such partition exists.
The call \textsc{Peel} implements \Cref{lem:peeling}.
The peeling record $\mathcal L$ consists of the removal order
and each removed vertex's edges to vertices present at its removal.
\Cref{alg:exactk} gives explicit pseudocode.
All empty forest labels are implicit.

\begin{algorithm}[!t]
\caption{$\ExactK(G,k,\mathcal O)$}
\label{alg:exactk}
\KwIn{Simple graph $G$, $k\geq 1$, and a $k$-out orientation $\mathcal O$}
\KwOut{A $k$-forest partition, or \textnormal{\textsc{No}} if none exists}
$(Q,\mathcal L)\gets\,$\Peel{$G,k$}, and restrict $\mathcal O$ to $Q$\;
\If{$Q$ is empty}{
  \Return reverse restoration from the empty core, using labels in $[k]$\;
}
$N\gets |V(Q)|$, $M\gets |E(Q)|$\;
\eIf{$k^2\leq k+N$}{
  Apply the rooted construction of \Cref{lem:small}\;
  If it rules out a $k$-forest partition, \Return \textnormal{\textsc{No}}\;
  Let $\mathcal F$ be its partition of $E(Q)$\;
}{
  $R\gets$ the greedy deletion set of \Cref{lem:deletion}\;
  $W\gets V(Q)\setminus R$\;
  $\mathcal F\gets\,$\AdditiveOneCall{$Q[W],k-1,\mathcal O|_{W}$}\;
  \For{each $a\in R$ in a fixed order}{
    $U_a\gets\{av\in E(Q):v\in W\}$, and regard $a$ as isolated in $\mathcal F$\;
    \If{$U_a\neq\varnothing$}{
      Apply $\CompleteStar$ to $Q[W\cup\{a\}]$, $\mathcal F$, and $U_a$\;
      If it rules out a $k$-forest partition, \Return \textnormal{\textsc{No}}\;
      Replace $\mathcal F$ by the returned, possibly recolored partition\;
    }
    $W\gets W\cup\{a\}$\;
  }
}
Restore $\mathcal L$ in reverse order using distinct incident labels in $[k]$\;
\Return $\mathcal F$\;
\end{algorithm}

\FloatBarrier
\Needspace{8\baselineskip}
\begin{lemma}[Correctness of $\ExactK$]
\label{lem:exactcorrect}
\Cref{alg:exactk} returns a valid $k$-forest partition
or \textnormal{\textsc{No}} if no such partition exists.
\end{lemma}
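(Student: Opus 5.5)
The proof follows the control flow of \Cref{alg:exactk}. We check each return point: every returned partition is a valid $k$-forest partition of $E(G)$, and every \textsc{No} is backed by a density witness $X\subseteq V(G)$ with $|X|\geq 2$ and $|E_G(X)|>k(|X|-1)$. Such a witness rules out $k$ forests, since each forest has at most $|X|-1$ edges inside $X$.

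First, \Cref{lem:peeling} with $b=k$ shows three things. Peeling takes linear time. The supplied orientation restricts to a $k$-out orientation of $Q$. Any $k$-forest partition of $Q$ extends to $G$ by reverse restoration with labels in $[k]$, and any witness in $Q$ remains a witness in $G$. If $Q$ is empty, reverse restoration alone yields the partition, so we are done. Otherwise it suffices to prove the claim for $Q$ in each branch.

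In the branch $k^2\leq k+N$, the claim is exactly \Cref{lem:small}. Its success case gives $k$ forests covering $E(Q)$. Its failure case produces a cut with fewer than $k$ incoming arcs, hence a set $X\subseteq V(Q)$ with $|E_Q(X)|>k(|X|-1)$.

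In the other branch, $k^2>k+N\geq k+1$, so $k\geq 2$ and $k-1\geq 1$. By \Cref{lem:deletion}, the orientation restricted to $Q[W]$ with $W=V(Q)\setminus R$ is $(k-1)$-out. \Cref{thm:additive} then gives $(k-1)+1=k$ forests partitioning $E(Q[W])$; we should note that its output is always correct, and only the running time is probabilistic.

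We then argue by induction over the loop. The invariant is that $\mathcal F$ is a $k$-forest partition of $E(Q[W])$. When $a$ is added, $Q[W\cup\{a\}]-a=Q[W]$, so \Cref{thm:star} applies with $H=Q[W\cup\{a\}]$ and $r=a$. If $U_a=\varnothing$, then $a$ is isolated and the invariant holds trivially. Otherwise \Cref{thm:star} either returns a $k$-forest partition of $E(Q[W\cup\{a\}])$, which restores the invariant, or returns $X\subseteq W\cup\{a\}$ with $|E_H(X)|>k(|X|-1)$. Since $H$ is an induced subgraph of $Q$, we have $E_H(X)=E_Q(X)$, so $X$ is a witness in $Q$ and hence in $G$. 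When the loop ends, $W=V(Q)$, so $\mathcal F$ partitions $E(Q)$. The final reverse restoration then extends it to $G$ by \Cref{lem:peeling}.

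The only delicate point is matching the interface: the algorithm calls the procedure ``\CompleteStar'' with arguments $(Q[W\cup\{a\}],\mathcal F,U_a)$, while \Cref{thm:star} is stated for a vertex $r$ and a partition of $E(H-r)$. We need to check that $U_a$ is precisely the star at $r=a$ in $H$ and that $\mathcal F$, with $a$ regarded as isolated, is the required partition of $E(H-r)$. We also need to check that the witness produced is an induced-density violation in $H$, so that its edge count transfers verbatim to $Q$ and then to $G$. With this identification in place, the remaining steps are routine bookkeeping.
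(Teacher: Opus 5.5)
Your proposal is correct and follows essentially the same route as the paper: peeling reduces the problem to the core, the rooted branch is handled by \Cref{lem:small}, and the deletion branch forces $k\geq 2$, so \Cref{lem:deletion} and \Cref{thm:additive} give $k$ forests on $Q[W]$; an invariant over the star restorations via \Cref{thm:star}, followed by reverse restoration, finishes the argument. The only cosmetic difference is that you carry explicit density witnesses $X$ with $|E_Q(X)|>k(|X|-1)$ over to $G$, whereas the paper simply notes that impossibility for a subgraph implies impossibility for $G$.
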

\begin{proof}
For every $k\geq 1$, peeling leaves an induced core with
$$
  (k+1)N\leq 2M,\qquad M\leq kN.
$$
If the core is empty, reverse restoration with $k$ labels gives
a forest partition by \Cref{lem:peeling}. If the rooted branch
is chosen, \Cref{lem:small} gives either a core partition that
extends by reverse restoration or a guarantee that no such
partition exists in $Q$, and hence none exists in $G$.

When $k=1$, the test $k^2\leq k+N$ always selects the rooted
branch. Thus the deletion branch has $k\geq 2$, and
\Cref{lem:deletion} gives a
$(k-1)$-out orientation of $Q-R$.
\Cref{thm:additive} therefore produces $k$ forests there.
Maintain a partition of $Q[W]$, initially
$W=V(Q)\setminus R$. Restoring $a$ adds exactly the edges
$U_a$, all incident with $a$.
\Cref{thm:star} either partitions the enlarged induced
graph or rules out a $k$-forest partition there, and hence in $G$.
On success replace $W$ by $W\cup\{a\}$.
An edge between two restored vertices is inserted exactly
when its later endpoint is restored, so all edges of $Q$
are eventually present once. Reverse restoration of peeled
vertices preserves the forests by \Cref{lem:peeling}.
\end{proof}

\begin{proposition}[Time for $\ExactK$]
\label{prop:exacttime}
For $m\geq 1$, \Cref{alg:exactk} runs in
$$
  O(n)+(m+2)^{o(1)}\min\{n^2,m^{4/3}\}
$$
expected and high-probability time.
\end{proposition}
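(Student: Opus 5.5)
We account for the running time of \Cref{alg:exactk} phase by phase and then compare the two branches. Peeling and the final reverse restoration take $O(n+m)$ time by \Cref{lem:peeling}. Since $m\le m^{4/3}$ and $m\le n^2$, these costs are absorbed into the claimed bound. If the core is empty there is nothing more to do. Otherwise we work with $N\ge 1$ and $M=\Theta(kN)$, using $(k+1)N\le 2M\le 2kN$, and $N\le n$, $M\le m$.

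\emph{Rooted branch} ($k^2\le k+N$). \Cref{lem:small} gives deterministic time $\tO(kM)$. From $k^2\le k+N\le 2N$ we get $k=O(\sqrt N)$.
\begin{itemize}
\item Directly, $kM=O(k^2N)=O(N^2)=O(n^2)$.
\item Also $M=\Theta(kN)$ and $k=O(\sqrt N)$ give $M=O(N^{3/2})$. Hence $N=\Omega(M^{2/3})$, and since $M\ge kN/2$, also $k=O(\sqrt N)$.
\item To get a bound in $M$, use $k\le 2M/N$. This gives $kM\le 2M^2/N=O(M^{4/3})$ when $N=\Omega(M^{2/3})$.
\end{itemize}
The lower bound on $N$ holds: $N\ge k$ and $M\le kN\le N^2$, so $N\ge \sqrt M\ge M^{2/3}$ is not immediate. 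The cleaner route is to use $k\le\sqrt{2N}$ together with $M\le kN$, giving $M\le\sqrt2 N^{3/2}$, so $N\ge (M/\sqrt2)^{2/3}$. Thus the rooted branch costs $\tO(\min\{n^2,m^{4/3}\})$, and polylogarithmic factors are absorbed into $(m+2)^{o(1)}$.

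\emph{Deletion branch} ($k^2>k+N$, so $k\ge2$ and $k>\sqrt N$).
\begin{itemize}
\item \Cref{lem:deletion} costs $\tO(M+N)$.
\item \Cref{thm:additive} on $Q-R$ costs $\tO(M+Nk)=\tO(M)$, in expectation and with high probability.
\item Each of the $|R|=O(1+(N/k)\log(N+1))$ star insertions is one call to \Cref{thm:star}. That call runs on a graph with at most $N$ vertices and $k$ forests, so it costs $(Nk)^{1+o(1)}=M^{1+o(1)}$, using $Nk\le 2M$.
\end{itemize}
The total is $M^{1+o(1)}(1+N/k)$ up to logs. Now $k>\sqrt N$ gives $N/k<\sqrt N$, so the total is $M^{1+o(1)}\sqrt N$. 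Two estimates follow:
\begin{itemize}
\item $M\sqrt N\le kN\sqrt N$. Since $k\le N$ (simple graph), we have $M\le N^2$. Then $1+N/k\le 1+N^2/M$ gives $M(1+N/k)\le M+N^2=O(n^2)$.
\item $k\ge M/N$ gives $M(1+N/k)\le M+N^2\cdot(N/M)\cdot$, which is awkward. Instead use $N/k\le 2N^2/(kN)\le 2N^2/M$ together with $N/k<\sqrt N$. Then $M\cdot N/k\le\min\{2N^2,M\sqrt N\}$.
\item If $N\le M^{2/3}$, this gives $M\sqrt N\le M^{4/3}$. Otherwise $N^2/M$ exceeds $M^{1/3}$, which goes the wrong way. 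Using $k>\sqrt N$ and $M\ge kN/2>N^{3/2}/2$ gives $N<(2M)^{2/3}$, so $M\sqrt N=O(M^{4/3})$.
\end{itemize}

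\emph{Main obstacle: a uniform subpolynomial factor.} The subpolynomial factors must be uniform: $(Nk)^{o(1)}$ and $\log$ factors must all be bounded by $(m+2)^{o(1)}$. This holds since $Nk\le 2M\le 2m$; the $+2$ handles tiny $m$. So the per-call factor from \cite{van2023deterministic} is monotone in the instance size and can be upper bounded at size $O(m)$.

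For the probability claims, only the single call to \Cref{thm:additive} is randomized. Its expected and high-probability bounds are both $\tO(M)$. Everything else is deterministic, so the total bound holds both in expectation and with high probability.
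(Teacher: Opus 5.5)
Your running-time arithmetic is correct and is essentially the paper's argument, organized branch by branch instead of through a minimum. The paper notes that the test $k^2\le k+N$ is exactly $k\le 1+N/k$, so the cost is $\eta(m)[m+M\min\{k,1+N/k\}]$. It then gets $m^{4/3}$ by splitting at $k=M^{1/3}$ (using $N\le 2M/k$ for large $k$), and gets $n^2$ from $M(1+N/k)\le M+N^2$. Your branch conditions $k\lesssim\sqrt N$ and $k>\sqrt N$, combined with $M\ge kN/2$, yield the same estimates. The paper's route is a bit cleaner because it needs neither $k\le N$ nor a lower bound on $N$ in terms of $M$.

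Two clean-up points. Delete the abandoned detours, in particular the false inequality $\sqrt M\ge M^{2/3}$ and the unfinished expression in the second deletion-branch bullet. In the rooted branch, justify $k+N\le 2N$ (a nonempty core of a simple graph has $N\ge k+2$), or simply use $k(k-1)\le N$.

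\textbf{The gap is in the high-probability claim.} You cite \Cref{thm:additive} for the call on $Q[W]$ and conclude that, since everything else is deterministic, the total bound holds with high probability. But that theorem's guarantee is with high probability in the size of \emph{its own} input: the failure probability is $|W|^{-10}$. The core, and hence $W$, can be tiny compared with $n$ and $m$, for instance a small dense core inside a huge sparse graph. Then $|W|^{-10}$ is not polynomially small in $m$, so your argument does not establish the stated bound.

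The paper addresses exactly this point and expresses the failure probability in terms of the original $m$. By Markov's inequality and \Cref{lem:cycles}, each coloring trial is accepted with probability at least $1/2$, independently of the instance size. If the procedure resamples until it accepts, more than $O(c\log(m+2))$ trials occur with probability at most $(m+2)^{-c}$. Each trial and the accepted conversion cost $O(N)+\tO(M)$, and the extra $\log(m+2)$ factor is absorbed into $(m+2)^{o(1)}$, with constants depending on the desired exponent $c$. You need this amplification in terms of the global $m$ to conclude the high-probability part of the statement. Your expectation claim is unaffected.
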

\begin{proof}
Let $\eta(m)=(m+2)^{o(1)}$ be a common nondecreasing envelope
for exact-flow per-edge costs at all sizes at most $m$,
sorting and heap factors, dynamic-tree factors, and
$O(\log(m+2))$ resampling trials.
The constants can depend on the desired high-probability exponent.
Taking the maximum of the per-edge factors over sizes at most $m$
and absorbing polylogarithms gives such a nondecreasing envelope
without changing its subpolynomial form. We express the failure
probability in terms of the original $m$, even when a randomized
call is made on a smaller subgraph.

Reading, peeling, and final restoration cost $O(n+m)$.
The rooted branch costs $\tO(kM)$.
For the deletion branch, the greedy set has
$O(1+(N/k)\log(N+1))$ vertices.
The additive-one input work, a coloring trial, and the accepted
conversion cost $O(N)+\tO(M)$. The expected trial count is constant.
The geometric tail bounds the probability of more than
$O(c\log(m+2))$ trials by $(m+2)^{-c}$, even for a smaller
invocation. This is a running-time analysis; the algorithm keeps
resampling until it accepts. Its work is therefore bounded by
$\eta(m)(M+N)$ in expectation and with the desired overall
probability. Build and solve every nonempty star network from scratch.
Each has at most $M$ edges and costs at most
$\eta(m)M$. No sharing of flow work across calls is assumed.
The two branch bounds are therefore
$$
  O(n)+\eta(m)[m+kM]
  \quad\text{and}\quad
  O(n)+\eta(m)[m+M(1+N/k)].
$$
\Needspace{5\baselineskip}
The test $k^2\leq k+N$ chooses the smaller numerical
multiplier $k$ and $1+N/k$. Hence the same algorithm has
time
\refstepcounter{equation}\label{eq:envelope}
$$
  O(n)+\eta(m)\bigl[m+M\min\{k,1+N/k\}\bigr].
  \eqno(\theequation)
$$
Empty cores satisfy the stated bound directly.

\textbf{The $m^{4/3}$ bound.}
For a nonempty core, if $k\leq M^{1/3}$, then
$kM\leq M^{4/3}$. Otherwise $N\leq 2M/k$ gives
$$
  M(1+N/k)\leq M+\frac{2M^2}{k^2}
  \leq M+2M^{4/3}.
$$
Since $M\leq m$, \Cref{eq:envelope} is
$O(n)+\eta(m)O(m^{4/3})$.

\textbf{The $n^2$ bound.}
Independently, $M\leq kN$ gives
$$
  M\min\{k,1+N/k\}\leq M(1+N/k)\leq M+N^2.
$$
Simplicity implies $m\leq n(n-1)/2$, and $N\leq n$.
Thus \Cref{eq:envelope} is also
$O(n)+\eta(m)O(n^2)$. These two estimates hold for the
same algorithm and the same envelope, proving their minimum.
\end{proof}

\subsection[Proof of Theorem~\ref*{thm:main}]{Proof of \Cref{thm:main}}

\label{sec:orientation}
\label{sec:wrapper}

To obtain an optimal partition, we first find the smallest $k$ for
which $G$ admits a $k$-out orientation. Since $k\leq a(G)\leq k+1$,
it then suffices to determine whether $G$ can be partitioned into
$k$ forests. We use $\ExactK$ to make this decision; if no such
partition exists, we call $\ExactK$ again with parameter $k+1$.
The same orientation is also $(k+1)$-out, and the second call must
succeed because $a(G)\leq k+1$. The following lemma finds $k$
and the orientation needed by both calls.

\begin{lemma}[Minimum outgoing bound]
\label{lem:orientation}
For $m\geq 1$, the minimum outgoing bound $k$ and a corresponding
orientation can be computed deterministically in $m^{1+o(1)}$
time after reading the vertices.
\end{lemma}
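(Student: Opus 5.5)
We reduce the problem to at most $O(\log m)$ exact maximum-flow computations, each taking almost-linear time by \citep{van2023deterministic}. For a fixed integer $b\geq 1$, deciding whether $G$ has a $b$-out orientation is a bipartite assignment problem. We build a network with source $s$, one node $x_e$ per edge $e=uv$, one node $y_v$ per vertex, and sink $t$. It has arcs $s\to x_e$ of capacity one, arcs $x_e\to y_u$ and $x_e\to y_v$ of capacity one, and arcs $y_v\to t$ of capacity $b$. An integral flow of value $m$ exists exactly when a $b$-out orientation exists: orient $e$ away from the endpoint $y$ that receives its unit. This network has $O(n+m)$ vertices and $O(m+n)$ arcs. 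Since isolated vertices can be dropped after reading the vertices, its size is $O(m)$, and one integral max-flow costs $m^{1+o(1)}$ deterministically.

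Next we bound the search range. Orienting all edges along an arbitrary vertex order, or more simply noting that the outdegree of any orientation is at most the maximum degree, shows that $b=m$ (or $b=\Delta$) is always feasible. The value $b=0$ is infeasible since $m\geq 1$. Feasibility is monotone in $b$, because a $b$-out orientation is also $(b+1)$-out. We therefore binary search over $b\in[1,m]$ with $O(\log m)$ max-flow calls. We output the smallest feasible $k$ together with the orientation decoded from its integral flow.

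For the time bound, we use the polylogarithmic number of calls, each costing $m^{1+o(1)}$, so the total is $m^{1+o(1)}$. For the consequence $k\le a(G)\le k+1$ claimed earlier, we note that $k=\rho(G)$. By the orientation/pseudoforest equivalence and the Nash--Williams formulas from the introduction, this gives $\rho(G)\le a(G)\le\rho(G)+1$.

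The only step needing care is the last one: justifying that the decoding and the graph construction are linear, and that the cited min-cost/max-flow algorithm is deterministic with integral output on these unit-ish capacities. This is mainly a bookkeeping check rather than a genuine obstacle. One could alternatively avoid the binary search by a single parametric computation, but the $O(\log m)$ overhead is already absorbed in $m^{o(1)}$.
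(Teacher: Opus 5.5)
Your proposal is correct and follows the paper's proof: the same edge--vertex assignment network with arcs of capacity $b$ into $t$, monotone feasibility in $b$, and a binary search using $O(\log m)$ deterministic almost-linear max-flow computations. The differences from the paper do not affect the lemma: you use unit rather than $m+1$ capacities on the arcs $x_e\to y_u$, you search $[1,m]$ rather than $[0,m]$, and your aside on $\rho(G)\le a(G)\le\rho(G)+1$ lies outside the lemma's statement.
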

\begin{proof}
For a candidate $b\in[0,m]$, create one node $z_e$ for each edge
$e=uv$, one node for each incident vertex, and terminals $s,t$.
The arcs are
$$
\begin{array}{c|c}
\text{arc} & \text{capacity}\\ \hline
s\longrightarrow z_e & 1\\
z_e\longrightarrow u,\ z_e\longrightarrow v & m+1\\
v\longrightarrow t & b.
\end{array}
$$
An integral flow of value $m$ chooses one endpoint of every edge.
Orient the edge out of that endpoint. The capacity into $t$
enforces maximum outdegree at most $b$. Conversely, a $b$-out
orientation defines such a flow.

Feasibility is monotone in $b$: increasing the outgoing bound can
only increase the capacities into $t$. The network has $O(m)$ nodes
and arcs and polynomial integral capacities. Binary search on $[0,m]$ uses
$O(\log(m+2))$ exact maximum-flow computations to find $k$
and its orientation. Only the edgeless graph has minimum outgoing
bound zero.
Each flow computation takes $m^{1+o(1)}$ time~\cite{van2023deterministic},
and the logarithmic number of calls is absorbed into the
subpolynomial factor.
\end{proof}

\paragraph{The orientation-free arboricity wrapper.}

An edgeless graph has arboricity zero and the empty partition.
For $m\geq 1$, \Cref{alg:wrapper} finds the minimum outgoing bound $k$ before
calling $\ExactK$. This makes a returned $k$-forest partition
optimal. If the first call rules out $k$ forests, we call
$\ExactK(G,k+1,\mathcal O)$. Since $\mathcal O$ is also
$(k+1)$-out and $a(G)=k+1$, this call returns an optimal partition.

\begin{algorithm}[H]
\caption{$\Arboricity(G)$}
\label{alg:wrapper}
\KwIn{Simple unweighted undirected graph $G$ with $m\geq 1$}
\KwOut{$a(G)$ and an optimal forest partition}
Find the minimum outgoing bound $k$ and a $k$-out orientation $\mathcal O$\;
$\mathcal A\gets\,$\ExactKCall{$G,k,\mathcal O$}\;
\eIf{$\mathcal A$ is a forest partition}{
  \Return $k,\mathcal A$\;
}{
  $\mathcal F\gets\,$\ExactKCall{$G,k+1,\mathcal O$}\;
  \Return $k+1,\mathcal F$\;
}
\end{algorithm}

\paragraph{Optimality.}
By \Cref{lem:orientation}, we obtain the minimum outgoing bound $k$
and a corresponding orientation. As noted in \Cref{sec:intro}, $k\leq a(G)\leq k+1$.
If $\ExactK$ returns $k$ forests, they attain the lower bound
and are optimal.

If the first call to $\ExactK$ returns \textnormal{\textsc{No}},
then $a(G)>k$, and hence $a(G)=k+1$. The same orientation
$\mathcal O$ satisfies the outgoing bound for the second call
$\ExactK(G,k+1,\mathcal O)$. By \Cref{lem:exactcorrect}, this call
returns a partition into $k+1$ forests, which is optimal.

\paragraph{Running time and randomness.}
Orientation search costs $m^{1+o(1)}$. The wrapper makes at most
two calls to $\ExactK$, with parameters $k$ and $k+1$ on the same
graph. \Cref{prop:exacttime} applies to any supplied outgoing bound,
so each call takes
$$
  O(n)+(m+2)^{o(1)}\min\{n^2,m^{4/3}\}
$$
expected and high-probability time. Thus the second call changes
the total bound by at most a constant factor. For $m\geq 1$, both
$n^2$ and $m^{4/3}$ dominate $m$, so orientation search is absorbed
into the same bound.
Each call to $\ExactK$ uses the additive-one construction at most
once, in its deletion branch. There are therefore at most two such
randomized calls in the wrapper. Their running-time bounds hold
uniformly conditional on their inputs, including those of the
second call. A union bound gives the overall high-probability
bound, and linearity of expectation gives the expected bound.
Isolated vertices may be processed just once. The edgeless
case costs $O(n+1)$.

All randomness is confined to outgoing coloring and its
resampling. Accepted colorings are converted by deterministic
invariant-preserving operations, and every flow output is
integral and exact. Therefore every returned answer is correct.
This proves the theorem.

\textbf{Verification and scope.}
A partition is checked by grouping nonempty color classes and
testing each for acyclicity. Only vertex--color incidences of
present edges are needed, at most $2m$, so verification takes
$\tO(m)$ time after the input vertices have been read.
No allocation of all
$nk$ pairs on the original graph is required.
Simplicity is used in the cycle count, the missing-neighbor
argument, the orientation gap proof, and the quadratic bound
on $m$. Auxiliary directed multigraphs in the rooted branch
do not extend the theorem to undirected multigraph or weighted
inputs.

\section*{AI acknowledgment}
Some of the proofs in this paper were developed with the help of
OpenAI's GPT Astra, which also contributed to the writing and
exposition. The authors are responsible for the correctness of the
results and the final content of the paper.

\clearpage
\printbibliography
\end{document}